\newcommand{\nn}{\nonumber \\}
\newcommand{\bra}[1]{\langle{#1}|}
\newcommand{\ket}[1]{|{#1}\rangle}
\def\lsim{\mathrel{\rlap{\lower4pt\hbox{$\sim$}}
    \raise1pt\hbox{$<$}}}                % less than or approx. symbol
\def\gsim{\mathrel{\rlap{\lower4pt\hbox{$\sim$}}
    \raise1pt\hbox{$>$}}}                % less than or approx. symbol
\newcommand{\thm}[1]{\hyperref[thm:#1]{Theorem~\ref*{thm:#1}}}
\newcommand{\lem}[1]{\hyperref[lem:#1]{Lemma~\ref*{lem:#1}}}
\newcommand{\Hamseg}{V}
\newcommand{\appseg}{\tilde V}
\newcommand{\oaa}{V_{\rm oaa}}
\newcommand{\err}{\Delta}
\newcommand{\corr}{V_C}
\newcommand{\corrp}{V^+_C}
\newcommand{\appcor}{\tilde V_C}
\newcommand{\ser}{W}
\newcommand{\segs}{r}
\newcommand{\obv}{OAA}
\newtheorem{theorem}{Theorem}
\newtheorem{lemma}{Lemma}
\begin{document}
\title{Corrected quantum walk for optimal Hamiltonian simulation}
\author{Dominic W. Berry${}^1$ and Leonardo Novo${}^{1,2,3}$}
\affiliation{${}^1$Department of Physics and Astronomy, Macquarie University, Sydney, NSW 2109, Australia}
\affiliation{${}^2$Physics of Information and Quantum Technologies Group, Instituto de Telecomunica\c{c}\~oes, Lisbon, Portugal}
\affiliation{${}^3$Instituto Superior T\'{e}cnico, Universidade de Lisboa, Portugal}
\date{\today}

\begin{abstract}
We describe a method to simulate Hamiltonian evolution on a quantum computer by repeatedly using a superposition of steps of a quantum walk, then applying a correction to the weightings for the numbers of steps of the quantum walk.
This correction enables us to obtain complexity which is the same as the lower bound up to double-logarithmic factors for all parameter regimes.
The scaling of the query complexity is $O\left( \tau \frac{\log\log\tau}{\log\log\log\tau} + \log(1/\epsilon) \right)$ where $\tau := t\|H\|_{\max}d$, for $\epsilon$ the allowable error, $t$ the time, $\|H\|_{\max}$ the max-norm of the Hamiltonian,  and $d$ the sparseness.
This technique should also be useful for improving the scaling of the Taylor series approach to simulation, which is relevant to applications such as quantum chemistry.
\end{abstract}
\pacs{03.67.Ac, 89.70.Eg}
\maketitle

\section{Introduction}
The simulation of physical quantum systems is a natural application where quantum computers can achieve an exponential speedup (in the dimension of the system), and was Feynman's original motivation for proposing quantum computers \cite{Feyn}.
An algorithm for the case of a physical system composed of low-dimensional subsystems and a Hamiltonian that is a sum of interaction Hamiltonians was proposed by Lloyd \cite{lloyd}.
An alternative scenario is that where the matrix representing the Hamiltonian is sparse, and there is a procedure for calculating the positions and values of nonzero entries, which can be regarded as an oracle \cite{ats}.
The advantage of this approach is that it can be used to not just simulate Hamiltonians corresponding to physical systems, but to design other algorithms \cite{HHL,childs03,childs09,KC15}.
There have been many papers providing improved algorithms for simulating sparse Hamiltonians \cite{BACS,wiebe11,childs10,PQSV11,BC12,CW12,BCCKS14,BCCKS15,BCK15}.

One particular approach is that of a quantum walk \cite{childs10,BC12}, where a step of the quantum walk can be implemented using an oracle for the Hamiltonian, and has eigenvalues related to that of the Hamiltonian.
In Refs \cite{childs10,BC12}, the technique was to use phase estimation to estimate the eigenvalue of the quantum walk step, and use that to apply the appropriate eigenvalue for Hamiltonian evolution.
Another common approach is that based on a Lie-Trotter product \cite{lloyd,ats,BACS,wiebe11,PQSV11}.
A more recent improvement is to use a control qubit for each step in the product, and compress the control qubits \cite{BCCKS14}.
Surprisingly, this turns out to be equivalent to an approach based on a Taylor expansion of the exponential for the Hamiltonian evolution \cite{BCCKS15}.
These approaches provide a scaling that is logarithmic in $1/\epsilon$, where $\epsilon$ is the allowable error for the simulation, in contrast to previous approaches which were polynomial in $1/\epsilon$.

These results motivated an improved approach to quantum walks, where a superposition of different numbers of steps of the quantum walk is used \cite{BCK15}.
This approach enables simulations with query complexity
\begin{equation}
\label{orgsca}
O\left( \tau \frac{\log(\tau/\epsilon)}{\log\log(\tau/\epsilon)} \right),
\end{equation}
where $\tau := t\|H\|_{\max}d$.
Here $t$ is the time that the Hamiltonian evolution is to be simulated over, $\|H\|_{\max}$ is the max-norm of the Hamiltonian, and $d$ is the sparseness of the Hamiltonian, which is the maximum number of nonzero elements in any row or column.
The query complexity corresponds to the number of calls to the oracle for elements of the Hamiltonian.
In contrast, the lower bound to the complexity is \cite{BCK15}
\begin{equation}
\Omega\left( \tau + \frac{\log(1/\epsilon)}{\log\log(1/\epsilon)} \right).
\end{equation}
Hence, although there is optimal scaling (up to a logarithmic factor) in any of the parameters individually, the complexity of the algorithm has a product, whereas the lower bound has a sum, so there is room to improve the complexity in the parameter regime where $\tau$ is close to $\log(1/\epsilon)$.
Note that there is also a complexity in terms of the number of additional gates.
We do not consider that complexity here.

Here we show a method to reduce the complexity to close to the lower bound.
The idea is to use the superposition of different numbers of steps of the quantum walk as in \cite{BCK15}, but correct the weightings.
The scaling of the query complexity for our approach is
\begin{equation}
O\left( \tau \frac{\log\log\tau}{\log\log\log\tau} + \log(1/\epsilon) \right).
\end{equation}
Our technique is flexible enough that it can be applied to the Taylor series approach to quantum walks as well (though we will not provide the proof here).
The Taylor series approach is important because it can be used for situations like quantum chemistry \cite{chem1,chem2}, where the Hamiltonian naturally decomposes into a sum of terms, but it is difficult to construct an oracle to give elements of the Hamiltonian.
Another advantage of the Taylor series approach is that it gives better scaling for the number of additional gates (the above scalings are the query complexity).
After completion of this work, we were made aware of another approach that yields complexity slightly closer to the lower bound \cite{low}, but is only applicable to the quantum walk approach, not the Taylor series approach.

In the next Section we provide the theoretical background of the techniques from Ref.~\cite{BCK15}.
In Section \ref{sec:summ} we provide a description of our algorithm, and introduce definitions required in later sections.
Then in Section \ref{sec:algs} we define algorithms with a single round of correction and with two rounds of correction.
We provide the proofs for a single round of correction and two rounds of correction in Sections \ref{sec:single} and \ref{sec:double}, respectively.
We conclude in Section~\ref{conc}.

\section{Background}
\label{sec:back}
In the Hamiltonian simulation problem, a Hamiltonian $H$ acting on $n$ qubits is given, as well as a time $t$ and maximum allowable error $\epsilon>0$, and the task is to implement the unitary operation $e^{-iHt}$  within error $\epsilon$.
The error may be quantified by the diamond norm distance, although as shown in \cite{BCK15} it is equivalent to consider the spectral norm of the difference of the operators.
The question of appropriate measurements to perform on the resulting quantum state is separate from the Hamiltonian simulation problem.
For sparse Hamiltonian simulation, the Hamiltonian has a matrix representation in the computational basis with no more than $d$ nonzero entries in any row or column.
The Hamiltonian is specified by two oracles.
One takes as input a row number $j$ and integer $\ell$, and outputs the position of the $\ell$th nonzero element in row $j$.
This oracle computes the value in place, so it acts as
\begin{equation}
O_F\ket{j,\ell}=\ket{j,f(j,\ell)}\, ,
\end{equation}
where $f(j,\ell)$ is the function giving the column index.
The other oracle takes as input the row and column number for the Hamiltonian, and outputs an encoding of the value of corresponding matrix element of $H$.
This oracle acts as
\begin{equation}
O_H\ket{j,k,z} = \ket{j,k,z\oplus H_{jk}}\, ,
\end{equation}
where $z$ is a bit string used to represent entries of $H$, and $\oplus$ indicates bitwise addition.
These oracles represent some procedure for calculating the positions and values of nonzero entries in the Hamiltonian.

Next we summarise the key results needed from \cite{childs10,BC12,BCK15}.
Those works consider a quantum walk which is based on controlled state preparation.
The Hilbert space is expanded from ${\mathbb C}^{\cal N}$, where ${\cal N}=2^n$, to ${\mathbb C}^{2{\cal N}}\otimes {\mathbb C}^{2{\cal N}}$.
First an ancilla qubit is appended, which expands the space to ${\mathbb C}^{2{\cal N}}$, then the original Hilbert space and ancilla qubit are duplicated.
The duplication is performed using a controlled state preparation operator \cite{BCK15}
\begin{equation}\label{Tdef}
T := \sum_{j=1}^{\cal N} \sum_{b\in\{0,1\}} \left( \ket{j}\bra{j}\otimes \ket{b}\bra{b}\right)\otimes \ket{\varphi_{jb}}\, ,
\end{equation}
where $\ket{\varphi_{j1}}=\ket{0}\ket{1}$ and
\begin{equation}\label{varphidef}
\ket{\varphi_{j0}} = \frac 1{\sqrt{d}} \sum_{\ell\in F_j} \ket{\ell}\left[ \sqrt{\frac{H^*_{j\ell}}{X}} \ket{0} + \sqrt{1-\frac{|H_{j\ell}|}{X}}\ket{1} \right],
\end{equation}
where $F_j$ is the set of indices given by $O_F$ on input $j$, and $X\ge\|H\|_{\max}$.
For the algorithms described here, we will just take $X = \|H\|_{\max}$.
The method to perform the controlled state preparation is described in Lemma 4 of Ref.~\cite{BC12}.
The key feature of it that we need here is that it uses $O(1)$ calls to the oracles $O_F$ and $O_H$.

We use the convention that the basis states are given in the order $\ket{j_1}\ket{b_1}\ket{j_2}\ket{b_2}$, where $\ket{j_1}$ and $\ket{j_2}$ are the original Hilbert space and duplicated space, and $\ket{b_1}$ and $\ket{b_2}$ are the orignal ancilla qubit and duplicated ancilla qubit.
The quantum walk step is constructed from this controlled state preparation via
\begin{equation}\label{Udef}
U := iS(2TT^\dagger -\openone)\, ,
\end{equation}
where $S$ swaps the registers, as $S\ket{j_1}\ket{b_1}\ket{j_2}\ket{b_2}=\ket{j_2}\ket{b_2}\ket{j_1}\ket{b_1}$.
All algorithms of this type work by starting in the initial Hilbert space, performing $T$ to map the state into two copies of the Hilbert space, performing steps of the quantum walk $U$ on these two subsystems, then using $T^\dagger$ to map the system back to the original Hilbert space.

The crucial feature of the step $U$ is that its eigenvalues and eigenstates are related to those of $H$.
In particular, an initial eigenstate $\ket{\lambda}$ of $H$ will be mapped under $T$ to a superposition of two eigenstates of $U$, $\ket{\mu_+}$ and $\ket{\mu_-}$.
If the eigenvalue of the Hamiltonian is $\lambda$, the corresponding eigenvalues of $U$ are
\begin{equation}
\mu_\pm = \pm e^{\pm i\arcsin(\lambda/Xd)}.
\end{equation}
Using this expression and the generating function for Bessel functions, it can be shown that \cite{BCK15}
\begin{equation}\label{bessels}
\sum_{m=-\infty}^{\infty} J_m(-tXd) \mu_\pm^m = e^{-i\lambda t}.
\end{equation}
This expression effectively means that there is the equivalence
\begin{equation}
\sum_{m=-\infty}^{\infty} J_m(-tXd) U^m \equiv e^{-iHt},
\end{equation}
so if it is possible to apply a sum of powers of the quantum walk step $U$, it is possible to effectively apply the evolution under the Hamiltonian.
Note that the Hamiltonian acts on a different space than $U$, so these operators are not equal.
One needs to map from the original Hilbert space to the two copies of the space, apply the superposition of powers of $U$, then map back to the original Hilbert space to obtain $e^{-iHt}$.

The method to apply a linear combination of unitaries of this form is described in detail in Section IIIB of Ref.~\cite{BCK15}.
We summarise it for the case of powers of $U$ here.
First, rather than attempting to obtain the evolution for the complete time $t$, we divide the time up into $r$ segments, and apply the linear combination of unitaries for each time $t/r$.
Second, the infinite sum in Eq.~\eqref{bessels} needs to be truncated to finite values.
We denote the cutoff by $M$.
The cutoff needs to be chosen sufficiently large that the error for the complete simulation due to the discarded terms is no greater than $\epsilon$.
If we were to attempt to apply the sum of unitaries given in Eq.~\eqref{bessels}, but for time $t/r$, the weightings would be $v_m = J_m(-tXd/r)$.
The operation that we wish to perform can then be written as
\begin{equation}\label{defaps}
\appseg = \sum_{m=-M}^{M} v_m U^m .
\end{equation}

To perform the linear combination of unitaries, first an ancilla is prepared in the state
\begin{equation}\label{chidef}
\ket{\chi} = \frac 1{\sqrt{s}} \sum_{m=-M}^{M} \sqrt{v_m} \ket{m}\, ,
\end{equation}
where $s=\sum_{m=-M}^M |v_m|$.
Preparation of this ancilla requires no queries, and only $O(M)$ gates.
Then one performs a controlled operation
\begin{equation}\label{controlled}
{\rm select}(U) := \sum_{m=-M}^{M} \ket{m}\bra{m} \otimes U^m.
\end{equation}
Then, given an initial state $\ket{\psi}$, the state after appending the ancilla in the state $\ket{\chi}$ and applying ${\rm select}(U)$ is
\begin{equation}\label{intst}
{\rm select}(U) \ket{\chi}\ket{\psi} = \frac 1{\sqrt{s}}\sum_{m=-M}^{M} \sqrt{v_m} \ket{m} U^m\ket{\psi}\, .
\end{equation}
Next, if one were to apply the projector $\ket{\chi}\bra{\chi}$, corresponding to measuring the ancilla as being in the state $\ket{\chi}$, the resulting state would be
\begin{equation}
\ket{\chi}\frac 1s \appseg \ket{\psi}\, ,
\end{equation}
where the normalization indicates the probability of success.
That is, the desired sum of unitaries has been performed, but the probability of success is only about $1/s^2$.

An alternative way of viewing the measurement is that one first inverts the state preparation on the ancilla.
That will then yield the state
\begin{equation}\label{beforeoaa}
\ket{0} \otimes \frac 1s \appseg \ket{\psi} + \ket{\perp}\, ,
\end{equation}
where $\ket{0}$ is the zero initial state for the ancilla, and $\ket{\perp}$ is a possibly entangled state between the ancilla and target system that is perpendicular to $\ket{0}$ on the ancilla.
If one were to perform a measurement on the ancilla and obtain the result $0$, that would correspond to applying the projector $\ket{\chi}\bra{\chi}$ to the state in Eq.~\eqref{intst}.

Instead of just measuring, one can perform a procedure called \emph{oblivious amplitude amplification} ({\obv}) to boost the amplitude of the first term in Eq.~\eqref{beforeoaa} to close to $1$.
Define $W_\chi$ to be the operation corresponding to preparing the state $\ket{\chi}$ from $\ket{0}$, then applying ${\rm select}(U)$, then inverting the state preparation operation.
See Lemma 5 of Ref.~\cite{BCK15} for an explanation of this operation (the notation used in \cite{BCK15} is $W$ instead of $W_\chi$).
Also define $P:= \ket{0}\bra{0}\otimes \openone$ to be the projection onto $\ket{0}$ on the ancilla.
Then in a single step of \obv, one reflects about the zero state on the ancilla via $\openone-2P$, performs $W_\chi^\dagger$, reflects about the zero state on the ancilla again, and performs $W_\chi$.
That is, the total operation performed for the segment is
\begin{equation}
-W_\chi (\openone-2P) W_\chi^\dagger (\openone-2P) W_\chi \, ,
\end{equation}
where the minus sign avoids a global phase factor.
See Section~3.2 of Ref.~\cite{BCK15} for a detailed explanation of OAA and the proof of accuracy given that the sum of unitaries is not exactly unitary.
It is also possible to consider multiple steps of {\obv}, but here we only need a single step.
The key feature of {\obv} that we need here is that after one step it gives the result
\begin{equation}\label{afteroaa}
\ket{0} \otimes \left(\frac 3s \appseg - \frac 4{s^3}\appseg \appseg^\dagger \appseg \right)\ket{\psi} + \ket{\perp'}\, ,
\end{equation}
where $\ket{\perp'}$ is some new perpendicular state.
If $s$ is exactly $2$, and $\appseg$ is exactly unitary, then the state would be $\ket{0} \appseg\ket{\psi}$, so the desired sum of unitaries has been performed.
Even if $\appseg$ is not exactly unitary, then the result will still be close to that desired.
Also, if $s$ is not exactly $2$, but is instead less than $2$, then it is trivial to include an extra ancilla that effectively increases it to $2$, and the result is the same \cite{BCK15}.

As the argument of the Bessel function is increased, it is necessary to increase $M$ in order to limit the error.
Then the value of $s$ also increases (although not monotonically).
This is why the evolution is divided up into $r$ segments.
By choosing $r$ sufficiently large, $s\le 2$, so only a single step of OAA is required for each segment.
Then the overall Hamiltonian evolution is simulated by performing the simulation for each time $t/r$ in succession.
The overall complexity is then a product of the number of segments times the complexity for each segment.
The value $s\le 2$ can be obtained with the argument of the Bessel function $O(1)$.
The argument of the Bessel function is $-t \, Xd/r$, so the total number of segments should be $r=O(tXd)$.
Taking $X=\|H\|_{\max}$ and $\tau := t\| H \|_{\max} d$, the number of segments is $O(\tau)$.

In Ref.~\cite{BCK15} the segments are just simulated in succession with no correction.
Therefore, to bound the final error as no larger than $\epsilon$, the error for each segment should be no larger than $O(\epsilon/\tau)$.
The values of Bessel functions are bounded as \cite{NIST}
\begin{equation}
|J_m(z)| \le \frac 1{m!} \left|\frac z2\right|^{|m|}.
\end{equation}
The error in the truncation will correspond to the absolute values of the Bessel functions which are omitted.
If we truncate at $M>0$, then the truncated terms will be bounded as \cite{BCK15}
\begin{equation}
2 \sum_{m=M+1}^{\infty} |J_m(z)| \le 4\frac 1{(M+1)!} \left|\frac {z}2\right|^{M+1}.
\end{equation}
Using this expression, the error for the segment can be limited to $O(\epsilon/\tau)$ if the scaling of the cutoff is
\begin{equation}
M = O\left( \frac{\log(\tau/\epsilon)}{\log\log(\tau/\epsilon)} \right).
\end{equation}
Recall that each application of $U$ uses $O(1)$ queries, and $M$ is the cutoff to the maximum power of $U$, so
the complexity for a single segment is $O(M)$.
In Ref.~\cite{BCK15} the overall complexity is then obtained by multiplying by the number of segments to give $O(\tau M)$, which gives the complexity scaling in Eq.~\eqref{orgsca}.
%%%%%%%%%%%%%%%%%%%%%%%%%%%%%%%%%%%%%%%%%%%%%%%%%%%%%%%%%%%%%%%%%%%%%%%%%%%%%%%%%%%
\section{New Algorithms with Corrections}
\label{sec:summ}
The basis of our new technique is to allow a larger error for each segment, then reduce the error by performing a correction at the end.
Because the simulation of the individual segments need not be as accurate, the cutoff on the Bessel functions can be smaller and need not depend on $\epsilon$.
That eliminates the multiplying factor depending on $\epsilon$ in the complexity.
The correction at the end can use powers of $U$ corresponding to the total complexity of the simulation thus far without changing the complexity except by a constant factor.
Therefore the correction can greatly reduce the error.

The reason this approach works is that one step of {\obv} gives an effective operation which is (taking $s=2$) \cite{BCCKS15}
\begin{equation}\label{oaadef}
\oaa := \frac 32 \appseg - \frac 12 \appseg \appseg^\dagger \appseg ,
\end{equation}
where $\appseg$ was defined in Eq.~\eqref{defaps}, and is the sum of powers of $U$ used to approximate the Hamiltonian evolution for a single segment.
What this means is that after performing the {\obv} we have still performed an operation that is a sum of powers of $U$.
To be more specific, the state has success flagged by zero in the ancilla, as well as a failure component, so is
\begin{equation}\label{voaa}
\ket{0} \otimes \oaa \ket{\psi} + \ket{\perp'}\, .
\end{equation}
Also, the operation $\appseg$ is
\begin{equation}\label{apseg}
\appseg := \sum_{m=-M}^{M} J_m(z) U^m ,
\end{equation}
where $z$ satisfies
\begin{equation}\label{zrestrict}
\sum_{m=-M}^M |J_m(z)| \le 2\, .
\end{equation}
For a single round of error correction, we wish to have $zr=-\tau$.
If we select $z$ to be the maximum value such that Eq.~\eqref{zrestrict} is satisfied, then we would only be able to obtain a discrete set of values of $\tau$ (and therefore $t$).
We instead take $z$ to be the maximum value satisfying both \eqref{zrestrict} and $zr=-\tau$ for some integer $r$.

The effective operation after many segments is then a power of $\oaa$, with success flagged by the $\ket{0}$ states in ancillas.
That is, after $r$ segments the state becomes
\begin{equation}\label{voaar}
\ket{0}^{\otimes r} \otimes \oaa^r \ket{\psi} + \ket{\perp_2}\, .
\end{equation}
Hence, the effective operation is still a sum of powers of $U$.
This sum of powers of $U$ is something that we can easily calculate, and we can compare the weightings to those we would want in order to exactly simulate the Hamiltonian evolution.
At the end we can apply another operation that is a sum of powers of $U$ to correct the weightings, and reduce the error to order $\epsilon$.

If the truncation of this sum of powers is of the same order as the maximum power of $U$ for the operations that have been performed so far, then this correction will only give a constant multiplying factor to the complexity.
What is more, this correction will have most of its weight on $m=0$ (the identity), which means that the amplitude on success is high.
More explicitly, define $\corr$ to be the operation such that
\begin{equation}\label{cordef}
\corr\oaa^r = \sum_{m=-\infty}^{\infty} J_m(z r) U^m .
\end{equation}
That is, it yields the desired weightings to give the evolution under the Hamiltonian.
The correction has an expansion in terms of powers of $U$ as
\begin{equation}
\corr = \sum_{m=-\infty}^{\infty} a_m U^m,
\end{equation}
which we use to define the coefficients $a_m$.
In practice we need to truncate the sum to limit the complexity of the correction, giving
\begin{equation}\label{apcordef}
\appcor = \sum_{m=-N}^{N} a_m U^m.
\end{equation}
Applying this approximate correction via an ancilla, in the same way as described above for $\appseg$,  gives the state
\begin{equation}\label{aftercor}
\ket{0}^{\otimes (r+1)} \otimes \frac 1s \appcor\oaa^r \ket{\psi} + \ket{\perp_3}\, ,
\end{equation}
where $s=\sum_{m=-N}^N |a_m|$.
Again, provided that $s<2$, {\obv} can be used to obtain a result close to $\appcor\oaa^r \ket{\psi}$.
The operation after {\obv}, denoted $\oaa'$, is given by
\begin{equation}
\oaa' = \frac 32 \appcor\oaa^r - \frac 12 \appcor\oaa^r (\appcor\oaa^r)^\dagger \appcor\oaa^r\, .
\end{equation}
Again, the resulting state is flagged by zero states in the ancillas, and there is a contribution from an error state, in the same way as in Eqs.~\eqref{beforeoaa}, \eqref{afteroaa}, \eqref{voaa}, \eqref{voaar}, and \eqref{aftercor}.
From now on we will not write these explicitly for brevity.
Using Lemma 6 of \cite{BCK15}, if $\appcor\oaa^r$ is within $\delta$ of a unitary matrix, then $\oaa'$ is within $\delta$ of $\appcor\oaa^r$.

In order to obtain $s\le 2$ for the correction, we need the error in the segments to not build up above $O(1)$.
This means that the error in the individual segments needs to be $O(1/\tau)$, so the cutoff is
\begin{equation}
M = O\left( \frac{\log\tau}{\log\log\tau} \right).
\end{equation}
Multiplying by the number of segments, this gives a factor in the complexity of
\begin{equation}
O\left(\tau \frac{\log\tau}{\log\log\tau} \right).
\end{equation}
Another limiting factor is that the final error needs to be $O(\epsilon)$.
In the case where $\epsilon$ is small, it may be necessary to increase the cutoff $N$ for the final correction.
It is also convenient to increase $M$, though $M$ does not need to scale with $\epsilon$.
It turns out that it is sufficient to choose $M\tau = O(\log(1/\epsilon))$,
which gives the $O(\log(1/\epsilon))$ term in the final scaling.

The discussion so far is for a single round of error correction, which is addressed in detail in Section \ref{sec:single}.
It is also possible to repeatedly perform error correction, then perform a final round of error correction at the end.
That is, we have the operation $\oaa'$ repeated $r'$ times, with success again flagged by an ancilla state of $\ket{0}$.
Then the new exact correction operation $\corr'$, is such that
\begin{equation}
\corr' (\oaa')^{r'} = \sum_{m=-\infty}^{\infty} J_m(z r r') U^m .
\end{equation}
Note that we now have $r$ multiplied by $r'$, so we should have $\tau = -z r r'$.
For this case we need to select $z$ such that Eq.~\eqref{zrestrict} is satisfied and $\tau = -z r r'$, for appropriate choices of $r$ and $r'$.
The exact correction operation can again be expressed as a sum over powers of $U$ as
\begin{equation}
\corr'  = \sum_{m=-\infty}^{\infty} a'_m U^m ,
\end{equation}
which we use to define $a'_m$.
The approximate correction operation is then the truncated form of this, with
\begin{equation}
\appcor'  = \sum_{m=-N'}^{N'} a'_m U^m .
\end{equation}
This correction operation is applied using control registers and ${\rm select}(U)$ operations, then {\obv} is again applied.

Again we can estimate the complexity via a relatively simple approach.
Take $r=\Theta(\log\tau)$, so the error after the first round of error correction (i.e., the error in $\oaa'$) is $O(1/\tau)$.
Then we can take $r'=\Theta(\tau/\log \tau)$ to obtain $\tau = -z r r'$ as desired.
With this value of $r'$ the error will still be $O(1)$, so the final correction can be performed.
Moreover, taking $r=\Theta(\log\tau)$, we can allow error for the individual segments (i.e., in $\appseg$) to be $O(1/\log\tau)$.
That means we can take the cutoff for $\appseg$ to be
\begin{equation}
M = O\left( \frac{\log\log\tau}{\log\log\log\tau} \right).
\end{equation}
That is what gives our double-log factor in the final result.
The details of this double round of error correction are given in Section \ref{sec:double}.
%%%%%%%%%%%%%%%%%%%%%%%%%%%%%%%%%%%%%%%%%%%%%%%%%%%%%%%%%%%%%%%%%%%%%%%%%%%%%%%%%%%%%%%%%%%%%%%%%
\section{Summary of Algorithms}\label{sec:algs}
~\vspace{-1cm}
\RestyleAlgo{boxruled}
\LinesNumbered
\begin{algorithm}[h!]
\caption{Hamiltonian simulation with a single round of correction \label{alg1}}
\begin{changemargin}{-0.5cm}{0.8cm} 
\begin{enumerate}
\item
Apply the controlled state preparation operator $T$ defined in Eq.~\eqref{Tdef} to map the state to two copies of the Hilbert space, as well as two ancilla qubits.
The states $\ket{\varphi_{jb}}$ required for $T$ are given via $\ket{\varphi_{j1}}=\ket{0}\ket{1}$ and Eq.~\eqref{varphidef}.
\item
For segments $1$ to $r$ with $r\in \Theta(\tau)$, perform the following steps.
\begin{enumerate}
\item Append an ancilla of dimension $2M+1$ in state $\ket{0}$, and apply the operation to prepare the state $\ket{\chi}$ given in Eq.~\eqref{chidef}.
\item Perform the controlled unitary operation ${\rm select}(U)$ given in Eq.~\eqref{controlled}.
The unitary operation $U$ is defined in Eq.~\eqref{Udef}, and uses a swap operation $S$ as well as the controlled preparation $T$. 
\item Apply the inverse of the state preparation operation, giving the state in Eq.~\eqref{beforeoaa}.
That is, the operation $\appseg$ has success flagged by a $\ket{0}$ state in the ancilla.
\item Apply a single step of {\obv}, as described in Lemma 5 of Ref.~\cite{BCK15}.
This yields the operation $\oaa$, as defined in Eq.~\eqref{oaadef}, with success flagged by a zero in the ancilla.
\end{enumerate}
\item Apply the correction $\appcor$ defined in Eq.~\eqref{apcordef} via an ancilla and the unitary ${\rm select}(U)$, to give the state in Eq.~\eqref{aftercor}.
\item Apply a single step of {\obv} on steps 2 and 3 above.
\item Invert the controlled state preparation operator $T$ to map the state back into the original Hilbert space.
\end{enumerate}
\end{changemargin}
\end{algorithm}

\noindent
For the algorithm, we are provided an initial state $\ket{\psi}$ encoded on the qubits of the quantum computer.
The complete algorithm with a single round of correction is given as Algorithm \ref{alg1} above.
At the end of this procedure, we have a state that is an approximation of $e^{-iHt}\ket{\psi}$.
Provided $M$ and $N$ (the cutoffs for $\appseg$ and $\appcor$, respectively) are chosen appropriately, then the error will be within $\epsilon$.
For this algorithm we will always choose $M\ge 2$.

It is convenient for the discussion to refer to parts (a) to (d) of step 2 of Algorithm \ref{alg1} as a \emph{segment}.
Then, we refer to the procedure in steps 2 to 4 of Algorithm \ref{alg1} as a \emph{compound segment}.
The algorithm for two rounds of error correction repeats this compound segment a number of times denoted $r'$.
The cutoff $N$ for the correction $\appcor$ is taken to be $3rM$, so it has the same complexity as performing $\oaa^r$.
Then a correction denoted $\corr'$ is performed.
The complete procedure is given as Algorithm \ref{alg2} below.

\begin{algorithm}[h]
\caption{Hamiltonian simulation with two rounds of correction \label{alg2}}
\begin{changemargin}{-0.5cm}{0.8cm} 
\begin{enumerate}
\item
Apply the controlled state preparation operator $T$ defined in Eq.~\eqref{Tdef} to map the state to two copies of the Hilbert space, as well as two ancilla qubits.
The states $\ket{\varphi_{jb}}$ required for $T$ are given via $\ket{\varphi_{j1}}=\ket{0}\ket{1}$ and Eq.~\eqref{varphidef}.
\item
For compound segments $1$ to $r'$ with $r'\in \Theta(\tau/\log\tau)$, perform the following steps.
\begin{enumerate}
\item Use step 2 of Algorithm \ref{alg1} in order to apply $\oaa^r$ with $r\in \Theta(\log\tau)$.
\item Apply the correction $\appcor$ via an ancilla and the controlled unitary ${\rm select}(U)$.
\item Apply a single step of {\obv} on steps 2 and 3 above.
\end{enumerate}
\item Apply the correction $\appcor'$ via an ancilla and the controlled unitary ${\rm select}(U)$.
\item Apply a single step of {\obv} on steps 2 and 3 above.
\item Invert the controlled state preparation operator $T$ to map the state back into the original Hilbert space.
\end{enumerate}
\end{changemargin}
\end{algorithm}
%%%%%%%%%%%%%%%%%%%%%%%%%%%%%%%%%%%%%%%%%%%%%%%%%%%%%%%%%%%%%%%%%%%%%%%%%%%%%%%%%%%%%%%%%%%%%%%%%%%%%%%%%%%
\section{A Single Round of Error Correction}
\label{sec:single}
The specific results we derive for a single round of correction are as in the following theorem.
\begin{theorem}\label{thm:corthm}
A $d$-sparse Hamiltonian $H$ acting on $n$ qubits can be simulated for time $t$ within error $\epsilon>0$ with
\begin{equation}
O\left(\tau \frac{\log\tau}{\log\log\tau} + \log(1/\epsilon) \right)
\end{equation}
queries, where $\tau := t\|H\|_{\max}d$.
\end{theorem}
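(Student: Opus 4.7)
The plan is to track errors through three stages of Algorithm~\ref{alg1}: the Bessel-series truncation inside each segment, the OAA step per segment, and the final correction. First, set $r=\Theta(\tau)$ and $z=-\tau/r$ so that $|z|$ is bounded by a small absolute constant and, by the standard Bessel tail bound $|J_m(z)|\le(|z|/2)^{|m|}/|m|!$, the condition $\sum_{m=-M}^M|J_m(z)|\le 2$ holds for every $M\ge 2$. Define the exact per-segment operator $V_\star := \sum_{m\in\mathbb{Z}} J_m(z) U^m$, which is unitary and, via Eq.~\eqref{bessels}, acts as $e^{-iHt/r}$ on the image of $T$. Then $\|\appseg - V_\star\|\le 2(|z|/2)^{M+1}/(M+1)! =: \delta_M$, and Lemma~6 of \cite{BCK15} gives $\|\oaa-V_\star\|=O(\delta_M)$.

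Next, telescoping over the $r$ segments yields $\|\oaa^r - V_\star^r\|=O(r\delta_M)$, and $V_\star^r=\sum_m J_m(-\tau)U^m$ realises $e^{-iHt}$ on the correct subspace. By Eq.~\eqref{cordef} the correction operator $\corr=\sum_m a_m U^m$ satisfies $\corr\,\oaa^r = V_\star^r$ identically. Writing $\oaa^r = V_\star^r(\openone+E)$ with $\|E\|=O(r\delta_M)$, one has $\corr=(\openone+E)^{-1}$, dominated by $a_0\approx 1$. Taking $N=3rM$, which matches the maximum degree of $\oaa^r$ as a polynomial in $U$, I would then establish two estimates on the expansion of $\corr$: (i) the truncation tail $\sum_{|m|>N}|a_m|=O(\epsilon)$, to bound the error introduced by replacing $\corr$ with $\appcor$, and (ii) the subnormalisation $\sum_{|m|\le N}|a_m|\le 2$, so that a single step of OAA can be applied to the correction.

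Given both bounds, a final invocation of Lemma~6 of \cite{BCK15} on $\appcor\,\oaa^r$ shows the output is within $O(\epsilon)$ of $e^{-iHt}$. Query counting then closes the argument: each $U$ costs $O(1)$ oracle calls, each segment uses $O(M)$ applications of $U$, and the correction uses $O(N)=O(rM)$, so the total is $O(rM)$ queries. If $\tau\ge\log(1/\epsilon)$, I would choose $M=\Theta(\log\tau/\log\log\tau)$ to force $r\delta_M=O(1)$, producing the $O(\tau\log\tau/\log\log\tau)$ term. When $\log(1/\epsilon)$ dominates I would instead inflate $M$ so $M\tau=\Theta(\log(1/\epsilon))$, which is enough to push the correction's truncation error below $\epsilon$ while contributing only $O(\log(1/\epsilon))$ to the query budget.

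The hard part will be the simultaneous control of the correction coefficients: the tail bound (i) is needed for accuracy, while the head bound (ii) is needed for OAA, and the latter is the more delicate because amplification fails if $\sum|a_m|>2$. The cleanest route is probably to estimate the $a_m$ directly from the Bessel convolution identity produced by expanding $V_\star^r(\oaa^r)^{-1}$ in powers of $U$, using Stirling's bound on $|J_m|$ together with the smallness of $r\delta_M$ to control each term. Once these two $\ell^1$-type bounds are in hand, the rest of the argument is bookkeeping.
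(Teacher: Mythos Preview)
Your outline tracks the paper's argument closely: same algorithm, same choice $N=3rM$, same two requirements (tail of $a_m$ small for accuracy, $\sum|a_m|\le 2$ for a single OAA step), and the same case split on whether $\tau$ or $\log(1/\epsilon)$ dominates. Where your plan diverges from the paper is in \emph{how} the two $\ell^1$ bounds on the $a_m$ are obtained, and that is exactly the place where your proposal is underspecified.

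First, the paper does not work with your cumulative $E$ defined by $\oaa^r=V_\star^r(\openone+E)$; it instead writes $\corr=(V^\dagger\oaa)^{-r}=(\openone-W)^{-r}$ with $W$ a \emph{per-segment} quantity satisfying $s(W)\le 2s(\Delta)+O(s(\Delta)^2)$, where $s(\cdot)$ is the $\ell^1$ norm of coefficients and $s(\Delta)=\sum_{|m|>M}|J_m(z)|$. Submultiplicativity of $s$ then gives the head bound immediately: $s(\corr)\le(1-s(W))^{-r}\lsim 2$ once $r\,s(\Delta)\lesssim\log 2$. Your formulation via $\|E\|=O(r\delta_M)$ is in operator norm, which is the wrong norm for this step; $\|E\|$ small does not imply $s(E)$ small, and if you switch to $s(E)$ you get $s(E)\le(1+s(W))^r-1$, which is already near $1$ in the regime of interest and makes $(1-s(E))^{-1}$ less clean than $(1-s(W))^{-r}$.

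Second, for the tail bound the paper does \emph{not} estimate the $a_m$ ``directly from the Bessel convolution''. It uses a generating-function trick: with $\corrp(x):=\sum_m|a_m|x^m$ for real $x>1$, one has $\sum_{|m|>N}|a_m|\le 2x^{-(N+1)}\corrp(x)$, and $\corrp(x)\le(1-W_+(x))^{-r}$ where $W_+$ is $W$ with all coefficients replaced by their absolute values. Choosing $x=M/(z\zeta)$ (with $\zeta\approx 1.8$) forces $W_+(x)\le 1/2$, hence $\corrp(x)\le 2^r$ and $\sum_{|m|>N}|a_m|\lsim 2^{r+1}(z\zeta/M)^{N+1}$. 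This is the step that yields exponential decay in $N$ and makes $N=3rM$ suffice; your ``Bessel convolution plus Stirling'' route is not concrete enough to see that it would deliver a comparable bound.

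Finally, your difficulty assessment is inverted: the head bound $\sum|a_m|\le 2$ is the easier of the two (it is a one-line consequence of submultiplicativity of $s$ once $\corr=(\openone-W)^{-r}$ is in hand), whereas the tail bound is the part that needs the nontrivial $x$-evaluation idea.
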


To compare this result with the results of \cite{BCK15}, the algorithm in that work had complexity
\begin{equation}
O\left(\tau \frac{\log(\tau/\epsilon)}{\log\log(\tau/\epsilon)} \right),
\end{equation}
and a lower bound on the complexity
\begin{equation}
O\left(\tau + \frac{\log(1/\epsilon)}{\log\log(1/\epsilon)} \right).
\end{equation}
These bounds differ most significantly in the regime where $\tau$ is comparable to $\log(1/\epsilon)$.
In this regime the previous algorithm gives complexity approximately $O(\tau^2\log\tau)$,
whereas the complexity from \thm{corthm} scales as $\tau\log\tau$, which is close to a square root improvement.

In order to prove this result, we first prove a number of intermediate lemmas.
In these proofs we make a slight change from the notation used in the previous sections.
We will now take the operations $\appseg$, $\corr$, $\appcor$, etc., explicitly as functions of $U$.
This is because we wish to calculate $s$ values for implementing a range of linear combinations of unitaries, and it is convenient to express $s$ as a functional of the operation.
In particular, for a function of the form $F(x)=\sum_n F_n x^n$, we define the functional $s$ by
\begin{equation}\label{sdef}
s(F):=\sum_n |F_n|\, .
\end{equation}

First we provide a lemma on the elementary properties of the functional $s$.
\begin{lemma}
\label{lem:s}
The functional $s$ satisfies the properties, for functions of $U$ denoted $F$ and $G$ and $\alpha$ and $\beta$ scalars,
\begin{align}\label{eq:res1}
s(\alpha F +\beta G) &\le |\alpha|s(F) + |\beta|s(G) \\
\label{eq:res2}
s(FG) &\le s(F)s(G)\, .
\end{align}
In addition, if $F$ and $G$ are sums over disjoint sets of powers, then
\begin{equation}\label{eq:res3}
s(\alpha F +\beta G) = |\alpha|s(F) + |\beta|s(G)\, .
\end{equation}
\end{lemma}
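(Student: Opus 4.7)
The proof is essentially a verification that $s$ behaves like the $\ell^1$ norm on the coefficient sequence of a formal power series in $U$. I would expand each side in the defining basis $\{U^n\}$ and reduce everything to elementary inequalities on complex numbers.

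For the first inequality \eqref{eq:res1}, write $F(U)=\sum_n F_n U^n$ and $G(U)=\sum_n G_n U^n$, so that $\alpha F+\beta G$ has $n$-th coefficient $\alpha F_n+\beta G_n$. Then
\begin{equation*}
s(\alpha F+\beta G)=\sum_n |\alpha F_n+\beta G_n|\le |\alpha|\sum_n |F_n|+|\beta|\sum_n |G_n|
\end{equation*}
by the ordinary triangle inequality applied termwise. For \eqref{eq:res2}, expand the product formally as $F(U)G(U)=\sum_n\bigl(\sum_k F_k G_{n-k}\bigr)U^n$; the key step is the triangle inequality inside the convolution followed by a reindexing $(k,n)\mapsto (k,m)$ with $m=n-k$, which factors the double sum:
\begin{equation*}
s(FG)\le \sum_n\sum_k |F_k||G_{n-k}|=\Bigl(\sum_k |F_k|\Bigr)\Bigl(\sum_m |G_m|\Bigr).
\end{equation*}
Absolute convergence of both series (which is what finiteness of $s(F)$ and $s(G)$ means) is what justifies swapping the order of summation; in the applications later in the paper $F$ and $G$ will in fact be finite Laurent polynomials in $U$, so convergence is automatic and this step is harmless.

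For the disjointness statement \eqref{eq:res3}, note that if the support sets of $\{F_n\}$ and $\{G_n\}$ are disjoint, then for every index $n$ at most one of $F_n$ and $G_n$ is nonzero, so $|\alpha F_n+\beta G_n|=|\alpha||F_n|+|\beta||G_n|$ with equality rather than inequality. Summing over $n$ turns \eqref{eq:res1} into an equality.

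I do not expect any serious obstacle here: everything follows from termwise manipulations and the triangle inequality. The only care needed is to treat $F$ and $G$ as formal expressions in the symbol $U$ (that is, the statement is about coefficient sequences, not about the operator $U$ itself), so that the $\{U^n\}$ indexing is unambiguous. The paper's use of $s$ is always on finite Laurent polynomials in $U$, which sidesteps any convergence subtleties, so the proof can be presented as three short calculations without any caveats.
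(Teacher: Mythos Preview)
Your proposal is correct and follows essentially the same approach as the paper: termwise triangle inequality for \eqref{eq:res1}, convolution expansion plus triangle inequality and reindexing for \eqref{eq:res2}, and splitting the sum over disjoint support sets for \eqref{eq:res3}. The only cosmetic difference is which factor carries the shifted index in the convolution ($F_kG_{n-k}$ versus the paper's $F_{n-k}G_k$), and your remarks on absolute convergence are an appropriate addition.
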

\begin{proof}{
To show the first result \eqref{eq:res1}, for
\begin{equation}
F(x)=\sum_n F_n x^n, \qquad G(x) = \sum_n G_n x^n,
\end{equation}
we have
\begin{align}
s(\alpha F +\beta G) &= \sum_n |\alpha F_n + \beta G_n| \nn
& \le |\alpha|\sum_n  |F_n| + \beta \sum_n |G_n| \nn
& = |\alpha|s(F) + |\beta|s(G)\, .
\end{align}
To show Eq.~\eqref{eq:res2},
\begin{equation}
(FG)(x) = \sum_n \sum_k F_{n-k} G_k x^n,
\end{equation}
so
\begin{align}
s(FG) &= \sum_n \left|\sum_k F_{n-k} G_k \right| \nn
&\le \sum_n \sum_k |F_{n-k}| \times |G_k| \nn
&= \sum_n |F_n| \sum_m |G_m| \nn
&= s(F)s(G)\, .
\end{align}
To show Eq.~\eqref{eq:res3}, denote the sets of powers for $F$ and $G$ by $S_F$ and $S_G$, respectively, so that $S_F\cap S_G=\emptyset$.
Then
\begin{align}
s(\alpha F +\beta G) &= \sum_n |\alpha F_n + \beta G_n| \nn
& = \sum_{n\in S_F} |\alpha F_n| + \sum_{n\in S_G} |\beta G_n| \nn
&= |\alpha|\sum_n  |F_n| + \beta \sum_n |G_n| \nn
& = |\alpha|s(F) + |\beta|s(G)\, .
\end{align}}
\end{proof}

The next lemma shows that the absolute values of the coefficients for the correction may be bounded.
This result is needed in order to show that the correction only needs a single step of \obv.
\begin{lemma}
\label{lem:sumbnd}
In Algorithm \ref{alg1}, the coefficients $a_m$ required for the correction operator $\corr(U)$ satisfy
\begin{equation}
\label{sumbnd2}
\sum_{m=-\infty}^{\infty} |a_m| \lsim \left( 1-2\sum_{|m|>M}|J_m(z)| \right)^{-\segs} ,
\end{equation}
where $\lsim$ indicates that higher-order terms in $\sum_{|m|>M}|J_m(z)|$ have been omitted.
\end{lemma}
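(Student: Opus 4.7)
The plan is to factor $\corr$ as an $r$-th power and bound its $s$-value by a single geometric series. Define the untruncated segment $\Hamseg := \sum_{m=-\infty}^{\infty} J_m(z) U^m$ and the tail $\err := \Hamseg - \appseg = \sum_{|m|>M} J_m(z) U^m$, so $\eta := s(\err) = \sum_{|m|>M}|J_m(z)|$. The Bessel generating function gives $\Hamseg^r = \sum_m J_m(zr) U^m$ as a formal Laurent series, so the defining relation \eqref{cordef} reads $\corr \oaa^r = \Hamseg^r$. Since $\Hamseg$ and $\oaa$ are Laurent polynomials in $U$ they commute, hence $\corr = (\Hamseg \oaa^{-1})^r$ and \lem{s} gives $s(\corr) \le s(\Hamseg \oaa^{-1})^r$. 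It therefore suffices to show $s(\Hamseg \oaa^{-1}) \lsim (1 - 2\eta)^{-1}$.

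Next I would factor $\oaa = \Hamseg(I - \gamma)$ with $\gamma := I - \Hamseg^{-1} \oaa$. The formal identity $\Hamseg^\dagger \Hamseg = I$, which follows from the Bessel identity $\sum_k J_k(z) J_{k+n}(z) = \delta_{n,0}$ (equivalently from $\Hamseg(U^{-1})\Hamseg(U) = e^{0} = I$), ensures that $\Hamseg^{-1} = \Hamseg^\dagger$ exists as a Laurent series with $s(\Hamseg^{-1}) = s(\Hamseg)$. Then $\Hamseg \oaa^{-1} = (I - \gamma)^{-1}$; expanding as a Neumann series and applying \lem{s} yields $s(\Hamseg \oaa^{-1}) \le (1 - s(\gamma))^{-1}$ whenever $s(\gamma) < 1$, so the task reduces to showing $s(\gamma) \lsim 2\eta$.

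To compute $s(\gamma)$ I would expand $\oaa$ to first order in $\err$. Using $\Hamseg^\dagger \Hamseg = I$ gives $\appseg^\dagger \appseg = I - \Hamseg^\dagger \err - \err^\dagger \Hamseg + \err^\dagger \err$, and substituting into $\oaa = \tfrac{3}{2}\appseg - \tfrac{1}{2}\appseg\appseg^\dagger\appseg$ together with cancellations from $\Hamseg \Hamseg^\dagger = I$ yields $\oaa = \Hamseg - \tfrac{1}{2}\err + \tfrac{1}{2}\Hamseg \err^\dagger \Hamseg + R$, where $R$ collects all products involving two or more factors of $\err$ and so has $s(R) = O(\eta^2)$. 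Hence $\gamma = \tfrac{1}{2}\Hamseg^{-1}\err - \tfrac{1}{2}\err^\dagger \Hamseg + O(\eta^2)$ in $s$-value. The disjoint-support equality in \lem{s} and the constraint \eqref{zrestrict} give $s(\Hamseg) = s(\appseg) + s(\err) \le 2 + \eta$, and submultiplicativity yields $s(\gamma) \le \tfrac{1}{2}(2+\eta)\eta + \tfrac{1}{2}\eta(2+\eta) + O(\eta^2) = 2\eta + O(\eta^2)$. Substituting back gives \eqref{sumbnd2}.

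The main obstacle is verifying the algebraic cancellation in the first-order expansion: each of the three $\appseg$ factors in $\appseg \appseg^\dagger \appseg$ contributes a linear-in-$\err$ term, and one must check that combined with $\tfrac{3}{2}\appseg$ these reduce, via $\Hamseg^\dagger \Hamseg = I$, to precisely $-\tfrac{1}{2}\err + \tfrac{1}{2}\Hamseg \err^\dagger \Hamseg$ at linear order. Once that cancellation is confirmed, the remainder of the argument is straightforward bookkeeping with \lem{s}, \eqref{zrestrict}, and the standard Bessel orthogonality identity.
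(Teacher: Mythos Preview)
Your proposal is correct and is essentially the paper's own argument in different clothing. Your $\gamma = I - \Hamseg^{-1}\oaa$ is exactly the paper's $W = I - \Hamseg^\dagger \oaa$ (since $\Hamseg^{-1}=\Hamseg^\dagger$ via Bessel orthogonality), and your bound $s(\corr)\le[s((I-\gamma)^{-1})]^{\segs}\le(1-s(\gamma))^{-\segs}$ is equivalent to the paper's direct negative-binomial expansion $(I-W)^{-\segs}=\sum_k\binom{\segs+k-1}{\segs-1}W^k$; the first-order cancellation you flag as the main obstacle is precisely what the paper records in its explicit formula for $W$, and your computation $\oaa=\Hamseg-\tfrac12\err+\tfrac12\Hamseg\err^\dagger\Hamseg+O(\err^2)$ is correct. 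The only cosmetic difference is that the paper bounds the linear terms using $s(\appseg)\le 2$ from \eqref{zrestrict} directly, whereas you pass through $s(\Hamseg)\le 2+\eta$; both give $s(W)\le 2\eta+O(\eta^2)$.
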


\begin{proof}{
For each segment we use a value $z$ and a cutoff $M$, so the operation we attempt to perform is
\begin{equation}
\appseg(U) := \sum_{m=-M}^M J_m(z) U^m .
\end{equation}
This is equivalent to Eq.~\eqref{apseg}, except we have given $\appseg$ as a function of $U$.
The choice of $z$ is restricted by Eq.~\eqref{zrestrict}, which ensures that $s(\appseg)\le 2$.
Similarly, we will write the operation that provides the exact evolution for that time interval as the function
\begin{equation}
\Hamseg(U) = \sum_{m=-\infty}^\infty J_m(z) U^m .
\end{equation}
Because $s(\appseg)\le 2$, using a single step of \obv, the operation is $\oaa(U)$, with
\begin{equation}\label{oaadef2}
\oaa := \frac 32 \appseg - \frac 12 \appseg \appseg^\dagger \appseg\, .
\end{equation}
It is also convenient to define
\begin{equation}
\err := \Hamseg-\appseg\, .
\end{equation}

After performing $\oaa(U)$ a number of times given by $\segs$, the actual operation is $[\oaa(U)]^\segs$, but the desired operation is $[\Hamseg(U)]^\segs$.
The correction $\corr$ that will yield the exact operation is given implicitly by 
\begin{equation}
\Hamseg^\segs = \corr \oaa^\segs\, ,
\end{equation}
which is equivalent to Eq.~\eqref{cordef}.
This means that we must have
\begin{align}\label{correx}
\corr &= (\Hamseg^\dagger \oaa)^{-\segs} \nn
&= \left( \frac 32 \Hamseg^\dagger \appseg - \frac 12 \Hamseg^\dagger \appseg (\Hamseg^\dagger \appseg)^\dagger \Hamseg^\dagger \appseg \right)^{-\segs} \nn
&= \left( \openone  - \ser \right)^{-\segs}\nn
&= \sum_{k=0}^{\infty}\binom{\segs+k-1}{\segs-1} \ser^k,
\end{align}
where
\begin{align}
\ser &=\frac 12 \left(V^\dagger\Delta-\Delta^\dagger \appseg+ V^\dagger\Delta V^\dagger\Delta+V^\dagger\Delta\Delta^\dagger \appseg\right) \label{shorter} \\
&=\frac 12 \left[\appseg^\dagger\Delta-\Delta^\dagger \appseg+\Delta^\dagger\Delta+ (\appseg^\dagger)^2\Delta^2+(\Delta^\dagger)^2\Delta^2+2\appseg^\dagger\Delta^\dagger\Delta^2+\appseg^\dagger \appseg\Delta\Delta^\dagger + \appseg\Delta(\Delta^\dagger)^2 \right]. \label{ampamp}
\end{align}

Using the expression \eqref{ampamp} for $W$, $s(\appseg)\le 2$, and the properties of the functional $s$ in Lemma \ref{lem:s}, we obtain
\begin{equation}
s(\ser) \le 2s(\err) + 9[s(\err)]^2+6[s(\err)]^3+[s(\err)]^4\, .
\end{equation}
Hence we have
\begin{equation}
s(\corr) \le \left[ 1  - 2 s(\err) \right]^{-\segs}+ O\left(\segs \left[s(\err)\right]^2\right).
\end{equation}
We can evaluate $s(\err)$ as
\begin{equation}\label{serr}
s(\err) = s(\Hamseg-\appseg) = \sum_{|m|>M} |J_m(z)|\, .
\end{equation}
Using this expression we obtain the result \eqref{sumbnd2} required for the Lemma.}
\end{proof}

In reality we will perform a truncated operator $\appcor(U)$ which has the higher powers of $U$ truncated, but from the definition of $s$ in Eq.~\eqref{sdef}, $s(\appcor)\le s(\corr)$.
Therefore the result in Lemma~\ref{lem:sumbnd} means that the value of $\segs$ such that the sum over $|a_m|$ is approximately bounded by $2$ is
\begin{equation}\label{Mval}
\segs \approx \frac{\log 2}{2 \sum_{|m|>M} |J_m(z)|}\,  .
\end{equation}

There is an important symmetry of the coefficients that appear in all stages of the algorithms.
\begin{lemma}\label{lem:symm}
At all stages in Algorithms \ref{alg1} and \ref{alg2} the operation that is performed corresponds to a sum of powers of $U$ of the form
$\sum_n c_n U^n$ where $c_{-n}=(-1)^n c_n$.
\end{lemma}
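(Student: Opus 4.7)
My plan is to proceed by induction on the structure of the operators produced in Algorithms \ref{alg1} and \ref{alg2}. Denote by $P$ the property that a formal Laurent series $F(U) = \sum_n f_n U^n$ satisfies $f_{-n} = (-1)^n f_n$. The approach is to identify the base operators to which the algorithms apply, show they satisfy $P$, and then verify that $P$ is preserved by every construction the algorithms use to build larger operators out of them.

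For the base case, the only operator with explicitly prescribed coefficients is $\appseg(U) = \sum_{m=-M}^M J_m(z) U^m$ (together with the ideal infinite sum $\Hamseg$). Here $P$ reduces to the Bessel identity $J_{-m}(z) = (-1)^m J_m(z)$. Every other operator produced by either algorithm --- $\oaa$, the error operator $\err = \Hamseg-\appseg$, the series $\ser$ in Eq.~\eqref{shorter}, the exact correction $\corr = (\openone-\ser)^{-\segs}$ from Eq.~\eqref{correx}, its symmetric truncation $\appcor$ in Eq.~\eqref{apcordef}, and the primed analogues of Algorithm \ref{alg2} --- is built from $\appseg$, $\Hamseg$, and $\openone$ using four operations: scalar multiples and sums, products, Hermitian conjugation, and symmetric truncation $|m|\le N$ (respectively $|m|\le N'$).

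The heart of the argument is closure of $P$ under those four operations. Scalar multiples and sums clearly preserve $P$, and a window of the form $|m|\le N$ trivially does too. For Hermitian conjugation I will use $(F^\dagger)_m = \overline{f_{-m}}$ and conjugate the relation $f_{-m} = (-1)^m f_m$ to obtain the corresponding identity for $F^\dagger$. The only non-trivial check is closure under products: writing $(FG)_n = \sum_k f_{n-k} g_k$ and making the substitution $k\mapsto -j$, then using $P$ on both factors, gives
\begin{equation}
(FG)_{-n} = \sum_k f_{-n-k} g_k = \sum_j (-1)^{n-j} f_{n-j}\, (-1)^j g_j = (-1)^n (FG)_n .
\end{equation}

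Once these closure statements are in place, the induction is routine: at every line of both algorithms, the current operator is obtained from earlier ones via one of the four operations, each of which preserves $P$, and the base case already has $P$ via the Bessel symmetry. The main obstacle, such as it is, will be purely cosmetic bookkeeping: I must check that each truncation used to define $\appseg$, $\appcor$, and $\appcor'$ is taken with a symmetric window about $m=0$, which the algorithm specifications already arrange. With that confirmed, the lemma follows immediately by induction.
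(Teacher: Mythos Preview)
Your proposal is correct and follows essentially the same route as the paper: both verify the Bessel symmetry $J_{-m}(z)=(-1)^mJ_m(z)$ for the base operators $\appseg$ and $\Hamseg$, then argue inductively that the property is preserved under the constructions used in the algorithms (sums, products, and symmetric truncation). Your version is slightly more careful in that you explicitly check closure under Hermitian conjugation via $(F^\dagger)_m=\overline{f_{-m}}$, whereas the paper only lists ``sums and products of $\appseg$ and $\Hamseg$'' and truncation, leaving the adjoints implicit; since the Bessel coefficients are real this is harmless there, but your treatment is cleaner and covers complex coefficients as well.
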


\begin{proof}{
This symmetry holds for $\appseg$ and $\Hamseg$ due to the properties of Bessel functions \cite{NIST}.
Then all operations that are performed can be obtained from sums and products of $\appseg$ and $\Hamseg$, as well as by truncating higher-order terms.
For example, Eqs.~\eqref{correx} and \eqref{ampamp} are used for $\corr$.
Then $\appcor$ can be obtained from $\corr$ by omitting higher-order terms.

It is obvious that all sums of functions with this symmetry retain the symmetry.
It is also obvious that omitting higher-order terms retains the symmetry.
It can also be shown that products of functions with this symmetry retain the symmetry via
\begin{align}
\sum_m c_m U^m \sum_k d_k U^k &= \sum_n \left( \sum_m c_m d_{n-m}\right) U^n \nn
&= \sum_n \left( \sum_m (-1)^m c_{-m} (-1)^{n-m} d_{-(n-m)}\right) U^n \nn
&= \sum_n (-1)^n \left( \sum_m c_{-m} d_{-(n-m)}\right) U^n \nn
&= \sum_n (-1)^n \left( \sum_m c_{m} d_{n-m}\right) U^{-n},
\end{align}
where in the last line the substitution $m\mapsto -m$ and $n\mapsto -n$ has been made.
Because all operations used in the algorithms are obtained from $\appseg$ and $\Hamseg$ by procedures that retain the symmetry, the symmetry is retained in the algorithms.}
\end{proof}

Next we prove a lemma bounding the error due to truncating the superposition for the correction.
\begin{lemma}\label{lem:corerr}
In Algorithm \ref{alg1}, the coefficients $a_m$ required for the correction operator $\corr(U)$ satisfy
\begin{equation}
\label{errbnd}
\sum_{|m|>N} |a_m| \lsim {2^{\segs+1}}{\left(\frac{z\zeta}{M}\right)^{N+1}},
\end{equation}
where $\zeta\approx 1.8$ is the solution of $e^{1+1/2\zeta}=2\zeta$, and $N$ is a non-negative integer.
\end{lemma}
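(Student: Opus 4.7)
The plan is to work from the expansion of $\corr$ derived in the proof of \lem{sumbnd},
\begin{equation}
\corr = \sum_{k=0}^{\infty}\binom{\segs+k-1}{\segs-1}\ser^k,
\end{equation}
and bound the weight this puts on powers $U^m$ with $|m|>N$. The main device I would introduce is the weighted sum-norm $h(F,x):=\sum_m|F_m|\,x^{|m|}$ for $x\ge 1$. Two routine properties are needed: submultiplicativity $h(FG,x)\le h(F,x)h(G,x)$, which follows from $|m+n|\le|m|+|n|$ and $x\ge 1$; and the Markov-type inequality $\sum_{|m|>N}|F_m|\le x^{-(N+1)}h(F,x)$. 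Combined with the binomial expansion these give
\begin{equation}
\sum_{|m|>N}|a_m|\le x^{-(N+1)}\sum_{k\ge 0}\binom{\segs+k-1}{\segs-1}h(\ser,x)^k = x^{-(N+1)}\bigl(1-h(\ser,x)\bigr)^{-\segs}
\end{equation}
whenever $h(\ser,x)<1$.

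Next I would bound the pieces of $\ser$ using $|J_m(z)|\le(|z|/2)^{|m|}/|m|!$. A direct calculation gives $h(\Hamseg,x)\le 2e^{|z|x/2}$, and dominating $\sum_{m>M}(|z|x/2)^m/m!$ by a geometric series (valid when $|z|x\lesssim M$) followed by Stirling yields
\begin{equation}
h(\err,x)\le 4\left(\frac{|z|xe}{2(M+1)}\right)^{M+1}.
\end{equation}
Because each of the four summands in Eq.~\eqref{shorter} contains at least one factor of $\err$ or $\err^\dagger$, and $h(\appseg,x)\le h(\Hamseg,x)$, submultiplicativity yields $h(\ser,x)\le h(\Hamseg,x)h(\err,x)+h(\Hamseg,x)^2 h(\err,x)^2$. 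Writing $u:=|z|xe/(2(M+1))$, the leading piece simplifies to $8(ue^{u/e})^{M+1}$ and the quadratic correction is a multiple of its square; both are controlled by keeping $ue^{u/e}$ strictly below $1$.

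The final step is to choose $x$ so that $h(\ser,x)\le 1/2$, making the binomial series sum to at most $2^{\segs}$. In the large-$M$ limit this forces $ue^{u/e}=1$, whose unique positive root $u_*$ can be written as $u_*=e/(2\zeta)$, and a one-line rearrangement turns that into the defining equation $e^{1+1/(2\zeta)}=2\zeta$ for $\zeta$. At this choice $x^{-1}=|z|\zeta/M$ up to $O(1/M)$ corrections, so
\begin{equation}
\sum_{|m|>N}|a_m|\lsim 2^{\segs+1}\left(\frac{|z|\zeta}{M}\right)^{N+1},
\end{equation}
the extra factor of $2$ absorbing the shift of $u$ below $u_*$ and the $(M+1)$ versus $M$ slack.

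The main obstacle will be keeping the constant $\zeta$ sharp: taking $u=u_*$ exactly would saturate the leading factor $8(ue^{u/e})^{M+1}$ at a constant of order one rather than $1/2$, destroying convergence of the $k$-series. The fix is to take $u$ marginally below $u_*$, at distance $O(1/M)$, which perturbs $x$ and hence the base of the power law by a factor $1+O(1/M)$ that is absorbed by the $\lsim$ in the statement. One also needs to verify that the quartic term in $h(\ser,x)$ does not spoil this, which follows from it being the square of the leading term and so controlled as soon as the leading term is below $1$.
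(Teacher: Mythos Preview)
Your approach is essentially the paper's: pass to a weighted sum-norm, feed the binomial expansion $\corr=\sum_k\binom{\segs+k-1}{\segs-1}\ser^k$ through submultiplicativity, and choose $x=M/(|z|\zeta)$ so that the norm of $\ser$ is at most $1/2$. The paper uses $\corrp(x)=\sum_m|a_m|\,x^m$ together with the symmetry $|a_{-m}|=|a_m|$ from Lemma~\ref{lem:symm}, whereas you use the two-sided $h(F,x)=\sum_m|F_m|\,x^{|m|}$; this is cosmetic, and your version has the mild advantage of not needing the symmetry lemma.

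The one point to tighten is the Stirling step. By writing $h(\err,x)\le 4u^{M+1}$ you have dropped the $\sqrt{2\pi(M+1)}$ prefactor, and that is precisely why your leading piece $8(ue^{u/e})^{M+1}$ freezes at the constant $8$ when $u=u_*$ and forces a back-off. Your proposed fix---shifting $u$ below $u_*$ by $O(1/M)$---perturbs the base $x^{-1}$ by a factor $1+O(1/M)$, which after raising to the $(N+1)$st power is \emph{not} innocuous for $N\gg M$: at the value $N=3\segs M$ used downstream it contributes an extra factor $C^{\segs}$, inflating the $2^{\segs}$ in the bound. The paper avoids this entirely by keeping the $1/\sqrt{M+1}$ from Stirling, so that at $x=M/(z\zeta)$ exactly one obtains $\ser_+(x)\lsim\text{const}/\sqrt{M}<1/2$ with no back-off and $\zeta$ sharp. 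Retain the full Stirling prefactor and the workaround in your final paragraph becomes unnecessary.
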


\begin{proof}{
A trick to bound the sum over $|a_m|$ is to define a modified function $\corrp$, which is the same as $\corr$ except with the absolute values of the coefficients,
and consider $\corrp(x)$, where $x$ is real and positive.
Using Lemma \ref{lem:symm} to give $|a_m|=|a_{-m}|$, and assuming that $x\ge 1$, we obtain 
\begin{align}\label{trick}
\sum_{|m|>N} |a_m| & = 2\sum_{m=N+1}^{\infty}|a_m| \nn
&\le \frac 2{x^{N+1}}\sum_{m=N+1}^{\infty} |a_m| x^m \nn
&\le \frac 2{x^{N+1}}\corrp(x) \, ,
\end{align}
where $\corrp(x):=\sum_{m=-\infty}^{\infty} |a_m| x^m$.
Note that in this expression we can take $N$ to be any integer $\ge 0$.
To upper bound $\corrp(x)$ we can use the expression \eqref{correx} together with \eqref{shorter}, and take the absolute values of all coefficients in Eq.~\eqref{shorter}.
That is,
\begin{align}\label{crpbnd}
\corrp(x) &\le \sum_{k=0}^\infty \binom{\segs+k-1}{\segs-1} [\ser_+(x)]^k \nn
&= \left[ 1-\ser_+(x)\right]^{-\segs} ,
\end{align}
where $\ser_+$ is the function $\ser$ modified to take the absolute values of all coefficients.

Using Eq.~\eqref{shorter}, we obtain
\begin{align}
\ser_+(x)&\le \frac 12 \left[ \sum_{q=-\infty}^{\infty} |J_q(z)| x^{-q} \sum_{|n|>M} |J_n(z)| x^n + \sum_{|n|>M}  |J_n(z)| x^{-n} \sum_{q=-M}^{M} |J_q(z)| x^{q} \right. \nn
&\quad +\left(\sum_{q=-\infty}^{\infty} |J_q(z)| x^{-q} \sum_{|n|>M}  |J_n(z)| x^n \right)^2 \nn
& \quad \left. + \sum_{q=-\infty}^{\infty} |J_q(z)| x^{-q} \sum_{|n|>M}  |J_n(z)| x^n \sum_{|m|>M}  |J_m(z)| x^{-m} \sum_{p=-M}^{M} |J_p(z)| x^{p} \right] \nn
&\le \sum_{q=-\infty}^{\infty} |J_q(z)| x^{q} \sum_{|n|>M} |J_n(z)| x^n +\left(\sum_{q=-\infty}^{\infty} |J_q(z)| x^{q} \sum_{|n|>M}  |J_n(z)| x^n \right)^2 .
\label{eq:shreal}
\end{align}
We will take $x=M/(z \zeta)$, where $\zeta\approx 1.8$ is the solution of $e^{1+1/2\zeta}=2\zeta$.
This number can be obtained as $1/(2\; {\rm ProductLog}[1/e])$ in Mathematica.
It is then easy to check numerically that the right-hand side (RHS) of \eqref{eq:shreal} is no greater than $1/2$.

To address this bound analytically,
we can use the upper bound
\begin{equation}
|J_m(z)|\le \frac 1{|m|!}\left|\frac z2\right|^{|m|} .
\end{equation}
Using this upper bound gives
\begin{equation}
\label{sumco}
\sum_{m=-\infty}^\infty |J_m(z)| x^m \le e^{xz/2} + (e^{z/2x}-1) \, ,
\end{equation}
and
\begin{equation}
\label{difco}
\sum_{|m|>M} |J_m(z)| x^m \le 2\frac{(xz/2)^{M+1}}{(M+1)!}+2\frac{(z/2x)^{M+1}}{(M+1)!}\, ,
\end{equation}
provided that $xz\le M+2$ and $z/x\le M+2$.
We take $x=M/(z \zeta)$, in which case the inequalities are satisfied.
For $M>2$ the second terms in Eqs.~\eqref{sumco} and \eqref{difco} are much smaller and may be omitted.
Keeping the lowest-order term in the expansion of $\ser_+(x)$, we obtain
\begin{equation}
\ser_+(x) \lsim 2\frac{(xz/2)^{M+1}}{(M+1)!}e^{xz/2}\, .
\end{equation}
Using $x=M/(z \zeta)$ and Stirling's approximation, it can be shown that the RHS decreases with $M$, and is $< 1/2$.
This part of the derivation is for $M>2$, and for the case $M=2$ it is easily checked numerically that the RHS is no greater than $1/2$.
The choice of $\zeta$ is so that the asymptotic expression decreases with $M$.
Hence $1-\ser_+(x) \gsim 1/2$, so Eq.~\eqref{crpbnd} gives $\corrp(x)\lsim 2^{r}$.
Using this bound in Eq.~\eqref{trick} gives Eq.~\eqref{errbnd}, as required.}
\end{proof}

If we want the correction to give no more than a factor of $2$ to the complexity, we can choose $N=3\segs M$, so
\begin{equation}
\sum_{|m|>N} |a_m| \lsim {2^{\segs+1}}{\left(\frac{z\zeta}{M}\right)^{3\segs M+1}}.
\end{equation}
Since this sum of terms gives the order of the error when the correction is used, we have an error from this correction that is exponentially small, as expected.
If we require higher accuracy, we could increase $N$ so there are more terms in the correction.
However, it is also possible to obtain higher accuracy by increasing $M$.
Since that does not increase the complexity (except for a factor of $2$) until $N=3\segs M$, it is advantageous to simply take $N=3\segs M$.

Next we can prove the overall result for the simulation with the correction.
\begin{proof}[Proof of \thm{corthm}.]
To prove this Theorem we use Algorithm \ref{alg1} and select $N=3\segs M$.
Algorithm \ref{alg1} proceeds by performing $\segs$ segments as in step 2, where each segment uses a superposition of powers of the step of the quantum walk with a cutoff of $M$ together with \obv.
Then the correction in step 3 uses a sum over powers of $U$ truncated at $N=3\segs M$, with weightings $a_m$.
Then {\obv} on the entire operation is used in step 4 to obtain success probability near $1$.

The complexity in this Theorem is obtained from two requirements.
First, we require $M$ to be sufficiently large that $s(\appcor)$ is not larger than $2$, so we can perform the final {\obv} with a single step.
Second, we require that the error at the end is no greater than $\epsilon$.
After the correction, the state is of the form \eqref{aftercor}, where $s=s(\appcor)$, and
\begin{equation}
s(\appcor) = \sum_{m=-N}^N |a_m|\, .
\end{equation}
That is, there is a $1/s(\appcor)$ factor on the amplitude for success, which is flagged by zeros in the ancilla qubits.

Using the result of \lem{sumbnd}, there is the upper bound
\begin{align}
s(\appcor) &\lsim \left( 1-2\sum_{|m|>M}|J_m(z)| \right)^{-\segs} \nn
&\approx \exp\left( 2\segs \sum_{|m|>M}|J_m(z)| \right).
\end{align}
Therefore we may obtain $s(\appcor)\lsim 2$ for
\begin{equation}\label{ine1}
\segs \lsim \frac {\log 2}{2\sum_{|m|>M}|J_m(z)|}\, .
\end{equation}
Using the inequality
\begin{equation}\label{ine2}
\frac {\log 2}{2\sum_{|m|>M}|J_m(z)|} \gsim \frac {(M+1)!\log 2}{8|z/2|^{M+1}}\, ,
\end{equation}
we find that provided
\begin{equation}\label{ine3}
\segs \lsim \frac {(M+1)!\log 2}{8|z/2|^{M+1}}
\end{equation}
is satisfied, we still obtain $s(\appcor)\lsim 2$.
This is because Eq.~\eqref{ine3} together with Eq.~\eqref{ine2} implies Eq.~\eqref{ine1}.
In order to obtain a simulation for overall time $t$, we require $\segs = \Theta(tXd)$, where $X$ can be taken equal to $\|H\|_{\max}$, and therefore $\segs=\Theta(\tau)$.
According to Eq.~\eqref{ine3}, we can obtain $s(\appcor)\lsim 2$ with $M=\Theta(\log \tau/\log\log \tau)$.

Next we consider the criterion that the error be no larger than $\epsilon$, which may possibly require a larger value of $M$.
Using \lem{corerr}, we have
\begin{equation}
\sum_{|m|>N} |a_m| \lsim {2^{\segs+1}}{\left(\frac{z\zeta}{M}\right)^{3\segs M+1}}.
\end{equation}
This expression gives the scaling in the error after the correction.
Using Eq.~\eqref{ampamp}, the error after {\obv} is of the same order.

Now, if the RHS is less than the allowable error $\epsilon$ with $M=\Theta(\log \tau/\log\log \tau)$, then we immediately have the error appropriately bounded with complexity $O(\tau \log \tau/\log\log \tau)$.
Otherwise we can increase $M$, and still obtain $s(\appcor)\lsim 2$.
Taking $3\segs M = O(\log(1/\epsilon))$ is sufficient to obtain error no larger than $\epsilon$.
Since $3\segs M$ is the same as the complexity up to a multiplicative factor, the overall result for the complexity is
\begin{equation}
O\left( \tau \frac{\log\tau}{\log\log\tau} + \log(1/\epsilon) \right).
\end{equation}
\end{proof}
%%%%%%%%%%%%%%%%%%%%%%%%%%%%%%%%%%%%%%%%%%%%%%%%%%%%%%%%%%%%%%%%%%%%%%%%%%%%%%%%%%%%%%%%%%%%%%%%%%%%%%%%%%
\section{Two Rounds of Error Correction}
\label{sec:double}
Next we consider the result with a second round of correction, as in Algorithm \ref{alg2}.
That is, we combine the multiple segments followed by a correction from the previous section into a compound segment; we repeat this compound segment multiple times, then perform a correction.
This technique enables the scaling in the following Theorem.
\begin{theorem}
\label{thm:corthm2}
A $d$-sparse Hamiltonian $H$ acting on $n$ qubits can be simulated for time $t$ within error $\epsilon>0$ with
\begin{equation}\label{recursecomplexity}
O\left(\tau \frac{\log\log\tau}{\log\log\log\tau}+\log(1/\epsilon) \right)
\end{equation}
queries, where $\tau := t\|H\|_{\max}d$.
\end{theorem}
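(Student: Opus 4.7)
The plan is to execute the same strategy as the proof of \thm{corthm} but one level up: treat each compound segment (inner correction plus \obv{}) as the basic building block, and apply analogs of \lem{sumbnd} and \lem{corerr} to the outer correction $\corr'$ with its truncation $\appcor'$.

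First, I fix parameters. Take $r = \Theta(\log\tau)$, $r' = \Theta(\tau/\log\tau)$, and $z$ with $zrr' = -\tau$, so $|z| = \Theta(1)$ and $s(\Hamseg) < 2$. Set the inner cutoff $M = \Theta(\log\log\tau/\log\log\log\tau)$, which gives $s(\err) = O(1/\log\tau)$ by the Bessel bound. By \lem{sumbnd} applied to the inner round, $s(\appcor) \lsim (1 - O(1/\log\tau))^{-\log\tau} = O(1)$, so the inner \obv{} is valid. Taking $N = 3rM$ as in Algorithm~\ref{alg1}, \lem{corerr} makes $s(\corr - \appcor)$ super-polynomially small in $\tau$.

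Next I estimate the effective per-compound-segment error $s(\err')$, where $\err' := \Hamseg^r - \oaa'$. Using the Bessel addition formula $\Hamseg^r = \sum_m J_m(zr) U^m$, the identity $\Hamseg^r = \corr\,\oaa^r$ from Eq.~\eqref{cordef}, and the explicit \obv{} formula for $\oaa'$, an algebraic expansion analogous to Eqs.~\eqref{correx}--\eqref{ampamp} expresses $\err'$ as combinations of $(\corr - \appcor)\oaa^r$ sandwiched between operators with at-worst polynomial $s$-norm in $\tau$ (since $s(\Hamseg^r) \le 2 e^{|zr|/2} = O(\sqrt{\tau})$ and $s(\oaa^r) \le s(\oaa)^r \le 7^{\log\tau}$ is polynomial, using $s(\oaa)\le \tfrac32\cdot 2 + \tfrac12\cdot 2^3 = 7$ from \lem{s}). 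Multiplying by the super-polynomially small $s(\corr - \appcor)$ shows $s(\err')$ is itself super-polynomially small, so in particular $r'\, s(\err') = o(1)$.

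The outer analogs of \lem{sumbnd} and \lem{corerr} then apply with the substitutions $\Hamseg \to \Hamseg^r$, $\appseg \to \appcor\,\oaa^r$, $\oaa \to \oaa'$, and segment count $\segs \to r'$. These give $s(\appcor') \lsim (1 - 2s(\err'))^{-r'} = O(1)$ and a super-polynomially small truncation error $s(\corr' - \appcor')$ at the natural choice $N' = 3 r' N = 9\tau M$. The final \obv{} then succeeds, and the total query cost is $r'(rM + N) + N' = \Theta(\tau M)$. Taking $M = \Theta(\log\log\tau/\log\log\log\tau)$ in the regime where $\epsilon$ is not too small, and enlarging $M$ so that $3 r' N = \Omega(\log(1/\epsilon))$ otherwise, yields the bound in Eq.~\eqref{recursecomplexity}. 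The main technical obstacle I anticipate is the outer analog of \lem{corerr}: the coefficients of $\corr' - \appcor'$ come from products involving $\oaa'$ and $\Hamseg^r$ rather than from simple Bessel tails, so verifying the analog of the estimate $\ser_+(x) < 1/2$ at a real argument $x \sim N/(|zr|\zeta)$ — chosen so that $|zr|/N = \Theta(\log\log\log\tau/\log\log\tau)$ is small — requires carefully combining a Stirling estimate for the Bessel tail of $\Hamseg^r$ beyond $N$ with the super-small contribution from $\err'$, then propagating the geometric-series bound through the $(1-\ser_+'(x))^{-r'}$ generating function.
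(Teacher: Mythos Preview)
Your overall strategy and parameter choices ($r=\Theta(\log\tau)$, $r'=\Theta(\tau/\log\tau)$, $M=\Theta(\log\log\tau/\log\log\log\tau)$, $N=3rM$, $N'=9rr'M$) match the paper exactly, and your argument for $s(\corr')\le 2$ via the crude polynomial bounds $s(V^r),\,s(\oaa^r)=\tau^{O(1)}$ against the super-polynomially small $s(\corr-\appcor)$ is sound.

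The gap is precisely where you flag it, and your sketched plan for the outer analog of \lem{corerr} would not go through as written. You propose to control $\ser'_+(x)$ via ``the Bessel tail of $V^r$ beyond $N$'' combined with the smallness of $\err'$. But the direct substitution $V\to V^r$, $\appseg\to\appcor\,\oaa^r$ gives $\ser'_+(x)$ bounded by products such as $(V^r)_+(x)\cdot(\Delta')_+(x)$, and at $x=\Theta(M)$ both $(V^r)_+(x)\approx e^{x|zr|/2}=e^{\Theta(rM)}$ and $(\oaa^r)_+(x)\le[\oaa_+(x)]^r=e^{\Theta(rM)}$ are enormous; the compensating smallness of $(\Delta_C)_+(x)$ is not obviously enough (and at $x=M/(z\zeta)$ the geometric series for $(\Delta_C)_+(x)$ actually diverges). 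The paper sidesteps this by two ingredients you do not mention: first, a \emph{pointwise} decay bound $|a_m|\lsim 2^r(z\zeta/M)^{|m|}$ on the inner-correction coefficients (a new lemma, not just the tail sum from \lem{corerr}); second, the observation that because everything commutes as Laurent series in $U$, one has $(V')^\dagger\Delta'=(\oaa^\dagger\oaa)^r\,\corr^\dagger\Delta_C$, so $\ser'$ factors through $(\oaa^\dagger\oaa)^r$ --- which is close to $\openone$ and satisfies $[(\oaa^\dagger\oaa)^r]_+(x)\le 2$ --- times expressions purely in the $a_m$. With this factorization, the pointwise $|a_m|$ bound and a choice $x=M/(z\zeta\zeta'2^{1/M})$ slightly below $M/(z\zeta)$ (so that the geometric series $\sum|a_m|x^m$ converges) give $\ser'_+(x)\le 1/2$ and hence the error bound. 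The ``Bessel tail of $V^r$'' plays no role.
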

In order to show this result, we again need to show a number of intermediate results.
The first is a bound on the magnitudes of the individual coefficients $a_m$, rather than the sum.
\begin{lemma}\label{lem:corerr2}
The coefficients $a_m$ for the correction operator $\corr(U)$ satisfy
\begin{equation}
\label{abnd}
|a_m| \lsim {2^\segs}{\left(\frac{z\zeta}{M}\right)^{|m|}},
\end{equation}
where $\zeta\approx 1.8$ is the solution of $e^{1+1/2\zeta}=2\zeta$.
\end{lemma}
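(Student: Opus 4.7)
The plan is to mimic the trick used in the proof of \lem{corerr}, but keep a single coefficient on the left-hand side rather than summing over the tail. Define
\begin{equation}
\corrp(x) := \sum_{m=-\infty}^{\infty} |a_m|\, x^m ,
\end{equation}
exactly as in the proof of \lem{corerr}. By \lem{symm} we have $|a_{-m}|=|a_m|$, so for any $x\ge 1$ and any integer $m$,
\begin{equation}
|a_m|\, x^{|m|} \le \sum_{n=-\infty}^{\infty} |a_n|\, x^n = \corrp(x),
\end{equation}
and hence $|a_m| \le \corrp(x)/x^{|m|}$.

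Next, I would invoke the bound $\corrp(x) \lsim 2^{\segs}$ already established in the proof of \lem{corerr} at the particular choice $x = M/(z\zeta)$, where $\zeta \approx 1.8$ solves $e^{1+1/2\zeta}=2\zeta$. That bound came from combining Eq.~\eqref{crpbnd} with the estimate $\ser_+(x)\le 1/2$, which the earlier proof verified at precisely this value of $x$ (using Stirling for $M>2$ and a numerical check for $M=2$). Substituting into the displayed inequality gives
\begin{equation}
|a_m| \le \frac{\corrp(M/(z\zeta))}{(M/(z\zeta))^{|m|}} \lsim 2^{\segs}\left(\frac{z\zeta}{M}\right)^{|m|},
\end{equation}
which is exactly \eqref{abnd}.

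There is essentially no new obstacle: the heavy lifting (the bound on $\corrp$ at $x=M/(z\zeta)$ and the verification that this $x$ is admissible, i.e.\ $xz\le M+2$ and $z/x\le M+2$) was already done inside the proof of \lem{corerr}, and the symmetry needed to extend the estimate from $m\ge 0$ to all $m$ is given by \lem{symm}. The only mild point to check is that $x = M/(z\zeta)\ge 1$ in the regime of interest, but this is harmless since whenever $M<z\zeta$ the claimed bound \eqref{abnd} is vacuous (the right-hand side exceeds $2^{\segs}$, which already bounds $|a_m|$ via $\corrp(1)\lsim 2^{\segs}$ from the same $\ser_+$ estimate applied at $x=1$).
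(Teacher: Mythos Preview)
Your proof is correct and essentially the same as the paper's. The paper bounds $|a_m|$ by the tail $\sum_{n\ge |m|}|a_n|$, inserts the factor $x^{|m|}$, and dominates by $\corrp(x)$, then invokes the bound $\corrp(M/(z\zeta))\lsim 2^{\segs}$ from the proof of \lem{corerr}; your single-term inequality $|a_m|x^{|m|}\le \corrp(x)$ is a slightly more direct variant of the same trick, and the rest is identical.
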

\begin{proof}{
Similar to Eq.~\eqref{trick}, for $x\ge 1$,
\begin{align}\label{tricky}
|a_m| & \le \sum_{n=|m|}^{\infty}|a_n| \nn
&\le \frac 1{x^{|m|}}\sum_{n=|m|}^{\infty} |a_n| x^n \nn
&\le \frac 1{x^{|m|}}\corrp(x) \,.
\end{align}
In exactly the same way as in the proof of \lem{corerr}, we can take $x=M/(z \zeta)$, to obtain $\ser_+(x) \lsim 1/2$ and $\corrp(x)\le 2^\segs$.
We then obtain Eq.~\eqref{abnd} as required.}
\end{proof}

This result shows that the $|a_m|$ exponentially decrease, so we can use this to show that the sum over $|a_m|$ is properly bounded, and we can even bound a sum over $|a_m|x^m$.
A key step in the proof for a single round of error correction is Eq.~\eqref{difco}, which bounds the difference of the coefficients of the ideal sequence and the sequence we have just performed.
Therefore, if we can bound that for the corrected series, then we can show that the recursion works.

The particular result we find is as in the following Lemma.
\begin{lemma}\label{lem:sumbnd2}
Consider Algorithm \ref{alg2}, with $N=3M\segs$, and $M$ chosen such that 
\begin{equation}\label{rest}
\segs \le \frac{\log 2}{2 \sum_{|m|>M} |J_m(z)|} \, .
\end{equation}
The coefficients $a'_m$ for the correction $\corr'(U)$ satisfy
\begin{equation}
\label{abnd2}
\sum_{m=-\infty}^{\infty}|a'_m| \lsim \left\{ 1-2\sum_{|m|>N}|a_m| \right\}^{-\segs'}.
\end{equation}
\end{lemma}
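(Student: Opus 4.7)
The plan is to run the analysis of \lem{sumbnd} one level up, with compound segments taking the role of individual segments. Write $V'(U):=\sum_m J_m(z\segs)U^m$ for the exact compound-segment operation; since every operator in the algorithm is a polynomial in the single unitary $U$ and so they all commute, the Bessel generating function gives the operator identity $V'=V^{\segs}$, and in particular $V'$ is unitary. The defining relation $\corr'(\oaa')^{\segs'}=(V')^{\segs'}$ then reads $\corr'=\bigl((V')^\dagger\oaa'\bigr)^{-\segs'}$, so the task reduces to expanding $(V')^\dagger\oaa'=1-W'$ and bounding $s(W')$ by $2\sum_{|m|>N}|a_m|$ up to higher-order terms.

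Set $A:=\appcor\,\oaa^{\segs}$ and $Y:=(V')^\dagger(V'-A)$, so that $(V')^\dagger A=1-Y$. Commutativity together with $V'(V')^\dagger=1$ gives $A=V'(1-Y)$ and $A^\dagger A=(1-Y^\dagger)(1-Y)$. Substituting these into $\oaa'=\tfrac{3}{2}A-\tfrac{1}{2}AA^\dagger A$, exactly as in the derivation leading to Eqs.~\eqref{correx}--\eqref{shorter}, yields
\begin{equation}
W' \;=\; \tfrac{1}{2}\bigl[\,Y - Y^\dagger + Y^2 + 2\,Y^\dagger Y - Y^\dagger Y^2\,\bigr],
\end{equation}
so that \lem{s} gives $s(W')\le s(Y)+\tfrac{3}{2}[s(Y)]^2+\tfrac{1}{2}[s(Y)]^3$. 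Everything thus reduces to controlling $s(Y)$.

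The crucial step is the clean factorization $\oaa^{\segs}=V'\,\corr^{-1}$, which is immediate from $\corr\,\oaa^{\segs}=V^{\segs}=V'$ together with commutativity. It gives $V'-A=(\corr-\appcor)\,\oaa^{\segs}=V'(\corr-\appcor)\,\corr^{-1}$, hence
\begin{equation}
Y \;=\; (\corr-\appcor)\,\corr^{-1}.
\end{equation}
By \eqref{correx} we have $\corr^{-1}=(1-\ser)^{\segs}$, and the proof of \lem{sumbnd} established $s(\ser)\le 2\sum_{|m|>M}|J_m(z)|+O(\cdot)$. Hypothesis \eqref{rest} is exactly the statement that $2\segs\sum_{|m|>M}|J_m(z)|\le \log 2$, so
\begin{equation}
s(\corr^{-1}) \;\le\; \bigl(1+s(\ser)\bigr)^{\segs} \;\lsim\; e^{\log 2} \;=\; 2,
\end{equation}
which yields $s(Y)\le s(\corr-\appcor)\,s(\corr^{-1})\lsim 2\sum_{|m|>N}|a_m|$. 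Feeding this back into the bound on $s(W')$ produces $s(W')\lsim 2\sum_{|m|>N}|a_m|$, with the $[s(Y)]^2$ and $[s(Y)]^3$ contributions absorbed by $\lsim$.

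Finally the Neumann expansion $\corr'=(1-W')^{-\segs'}=\sum_{k\ge 0}\binom{\segs'+k-1}{\segs'-1}(W')^k$, exactly as in \eqref{correx}, combined with \lem{s}, delivers
\begin{equation}
s(\corr') \;\le\; \bigl(1-s(W')\bigr)^{-\segs'} \;\lsim\; \Bigl(1-2\sum_{|m|>N}|a_m|\Bigr)^{-\segs'},
\end{equation}
which is the desired bound. The main obstacle is locating the right expression for the effective per-compound-segment error: the naive bound $s(V'-A)\le s(\corr-\appcor)\,s(\oaa^{\segs})$ is useless, because $s(\oaa^{\segs})$ can grow exponentially in $\segs$ even though $\oaa^{\segs}$ is operator-norm close to the unitary $V'$. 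The factorization $\oaa^{\segs}=V'\corr^{-1}$ replaces the unbounded factor $s(\oaa^{\segs})$ with $s(\corr^{-1})$, which is precisely the quantity that hypothesis \eqref{rest} was engineered to force to be $\lsim 2$.
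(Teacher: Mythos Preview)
Your proof is correct and reaches the same bound as the paper, but via a cleaner shortcut. The paper proceeds as you do up to the expression for $W'$ in terms of $V'$, $\tilde V'$, $\Delta'$, then factors each term as $(\oaa^\dagger\oaa)^r$ or $(\oaa^\dagger\oaa)^{2r}$ times products of $\corr$, $\appcor$, and $\Delta_C:=\corr-\appcor$; the heavy lifting there is a direct expansion of $\oaa^\dagger\oaa$ (Eq.~\eqref{unicor}) showing it differs from the identity only at second order in $s(\Delta)$, so that $s\bigl((\oaa^\dagger\oaa)^r\bigr)=1+O\bigl(r\,s(\Delta)^2\bigr)$ stays bounded under hypothesis~\eqref{rest}. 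You bypass that expansion entirely by observing that $Y=(V')^\dagger\Delta'$ collapses to $\Delta_C\,\corr^{-1}$ via the identity $\oaa^r=V'\corr^{-1}$, and then $\corr^{-1}=(1-W)^r$ is controlled by the one-line binomial bound $s(\corr^{-1})\le(1+s(W))^r\lsim 2$ directly from~\eqref{rest}. Both arguments rely on the commutativity of Laurent series in $U$ and on the formal unitarity $V'(V')^\dagger=1$; yours simply invokes $\corr^{-1}$ where the paper invokes $(\oaa^\dagger\oaa)^r$, trading the eight-term expansion of Eq.~\eqref{unicor} for a binomial estimate. The intermediate constants differ slightly, but both routes give leading term $2\sum_{|m|>N}|a_m|$ for $s(W')$, which is all that the $\lsim$ claim needs.
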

\begin{proof}{
First we define operators for what is achieved with $\segs$ steps and correction for the compound segment, in a similar way as we did for the first round of correction.
We use primed variables to indicate the new variable names corresponding to those for the first round of correction.
The exact operation, if there were perfect correction on the compound segment, is denoted $V'$, and the actual operation performed is denoted $\appseg'$.
The difference is denoted $\Delta'$, and $\Delta'=\appseg'-V'$.
As before these are all functions of the step operator $U$.
These functions may be written as
\begin{align}
V' &= \corr \oaa^\segs \nn
\appseg' &= \appcor \oaa^\segs \nn
\Delta' &= \left(\corr-\appcor \right) \oaa^\segs \, .
\end{align}
Following exactly the same derivation for these primed quantities as in the proof of \lem{sumbnd}, we have
\begin{equation}\label{cor2exp}
\corr' = \sum_{k=0}^{\infty}\binom{\segs+k-1}{\segs-1} \ser'^k,
\end{equation}
where
\begin{equation}
\ser' =\frac 12 \left(V'^\dagger\Delta'-\Delta'^\dagger \appseg'+ V'^\dagger\Delta' V'^\dagger\Delta'+V'^\dagger\Delta'\Delta'^\dagger \appseg'\right).
\end{equation}

We can rewrite $W'$ as
\begin{equation}\label{serp}
\ser' =\frac 12 (\oaa^\dagger \oaa)^\segs\left(\corr^\dagger\Delta_C-\Delta_C^\dagger \appcor\right)
+\frac 12 (\oaa^\dagger \oaa)^{2\segs} \left(\corr^\dagger\Delta_C \corr^\dagger\Delta_C+\corr^\dagger\Delta_C\Delta_C^\dagger \appcor \right),
\end{equation}
where $\Delta_C:=\corr-\appcor$.
Next, $\oaa^\dagger \oaa$ can be expanded as
\begin{align}\label{unicor}
\oaa^\dagger \oaa &= \left[\frac 32 \appseg^\dagger - \frac 12 \appseg^\dagger \appseg \appseg^\dagger \right]\left[ \frac 32\appseg - \frac 12\appseg\appseg^\dagger \appseg \right] \nn
&= \frac 94 \appseg^\dagger\appseg - \frac 32 (\appseg^\dagger\appseg)^2 + \frac 14 (\appseg^\dagger\appseg)^3 \nn
&= \openone - \frac 32 \Delta^\dagger \Delta + \frac 34 (\Delta^\dagger \Delta)^2 - \frac 14 (\Delta^\dagger \Delta)^3 - \frac 34 \left( (\appseg^\dagger \Delta)^2 +  (\Delta^\dagger\appseg)^2 \right) \nn
&\quad  -\frac 14 \left( (\appseg^\dagger \Delta)^3 + (\Delta^\dagger \appseg)^3\right) - \frac 34 \Delta^\dagger\Delta\left( \Delta^\dagger \appseg+  \appseg^\dagger \Delta\right).
\end{align}
Therefore, using the properties of the functional $s$ and $s(\appseg)\le 2$,
\begin{align}
s(\oaa^\dagger \oaa) &\le 1 + \frac 32 [s(\Delta)]^2 + \frac 34 [s(\Delta)]^4 + \frac 14 [s(\Delta)]^6 + 6[s(\Delta)]^2 + 4[s(\Delta)]^3 + 3[s(\Delta)]^3 \nn
&= 1+\frac {15}2  [s(\Delta)]^2 + 7 [s(\Delta)]^3 + \frac {3}4  [s(\Delta)]^4 + \frac 14  [s(\Delta)]^6.
\end{align}
Given that we have chosen $M$ so that Eq.~\eqref{Mval} holds, we have
\begin{align}
s\left((\oaa^\dagger \oaa)^\segs\right) &\le \left(1+\frac {15}2  [s(\Delta)]^2+O\left( [s(\Delta)]^3 \right)\right)^{\frac{\log 2}{2s(\Delta)}} \nn
&= 1+\frac{15}4 (\log 2) s(\Delta) + O\left( [s(\Delta)]^2 \right).
\end{align}
In addition, the restriction \eqref{rest} ensures that $s(\corr)\le 2$.
Hence we have
\begin{equation}
s(\ser') \le 2s(\Delta_C) + 4[s(\Delta_C)]^2 + O\left( [s(\Delta)]^2s(\Delta_C) \right).
\end{equation}
Using Eq.~\eqref{cor2exp} we then have
\begin{equation}
s(\corr') \lsim \left[ 1- 2s(\Delta_C)\right]^{-\segs'}.
\end{equation}
Recognising that $s(\Delta_C)\le \sum_{|m|>N} |a_m|$, this gives the result required.}
\end{proof}

Using Lemma \ref{lem:corerr}, and taking $N=3\segs M$, we then find that we can take $\segs'$ satisfying
\begin{equation}\label{muval}
\segs' \lsim \frac {\log 2}{2^{\segs+2}}{\left(\frac{M}{z\zeta}\right)^{3\segs M+1}},
\end{equation}
and obtain $s(\corr')\le 2$.
This expression is the equivalent of Eq.~\eqref{ine3}.
Next we prove a Lemma bounding the size of the error.
\begin{lemma}\label{lem:corerr3}
Consider Algorithm \ref{alg2}, with $N=3M\segs$, and $M$ chosen such that 
\begin{equation}
\segs \le \frac{\log 2}{2 \sum_{|m|>M} |J_m(z)|}\, .
\end{equation}
The coefficients $a'_m$ for the correction $\corr'(U)$ satisfy
\begin{equation}
\label{errbnd3}
\sum_{|m|>N'} |a'_m| \lsim {2^{\segs'}}{\left(\frac{z \zeta\zeta' 2^{1/M}}{M}\right)^{N'+1}},
\end{equation}
where $\zeta\approx 1.8$ is the solution of $e^{1+1/2\zeta}=2\zeta$ and $\zeta'\approx 1.5$ is a solution of $\zeta'^5(\sqrt 2-2\zeta')^2=16\sqrt 2$.
\end{lemma}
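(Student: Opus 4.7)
The plan is to mirror the proof of \lem{corerr}, but with all quantities replaced by their primed (second-round) analogues, and to identify the new evaluation point $x$ that makes the analogous estimates close up.

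First I would introduce $\corrp'(x) := \sum_{m=-\infty}^{\infty} |a'_m| x^m$, the ``absolute-value'' version of $\corr'$. By \lem{symm} the coefficients satisfy $|a'_{-m}|=|a'_m|$, so exactly as in Eq.~\eqref{trick} the symmetry gives
\begin{equation}
\sum_{|m|>N'} |a'_m| \,\le\, \frac{2}{x^{N'+1}}\,\corrp'(x)
\end{equation}
for any $x\ge 1$. Combining Eq.~\eqref{cor2exp} with the elementary functional properties of \lem{s} reduces the task to bounding
\begin{equation}
\corrp'(x) \le \bigl[1-\ser'_+(x)\bigr]^{-\segs'},
\end{equation}
where $\ser'_+$ denotes $\ser'$ with all coefficients made non-negative. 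Thus the whole lemma reduces to choosing $x$ so that $\ser'_+(x) \le 1/2$, which will then give $\corrp'(x)\le 2^{\segs'}$ and hence the stated bound.

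Next I would expand $\ser'_+(x)$ using Eq.~\eqref{serp}, writing it as a polynomial in three ``ingredients'':
\begin{equation}
\bigl[(\oaa^\dagger\oaa)^\segs\bigr]_+(x),\qquad \corrp(x),\qquad \Delta_{C+}(x),
\end{equation}
together with $\appcor_+(x)\le\corrp(x)$. Each of these has already been controlled earlier in the paper: $\corrp(x)\lsim 2^{\segs}$ from the proof of \lem{corerr}; $\Delta_{C+}(x)=\sum_{|m|>N}|a_m|x^{|m|}$ can be bounded by geometric summation using \lem{corerr2}, giving something like $\lsim 2^{\segs+1}(xz\zeta/M)^{N+1}/(1-xz\zeta/M)$; and $[(\oaa^\dagger\oaa)^\segs]_+(x)$ is controlled from Eq.~\eqref{unicor} by bounding it in terms of $\appseg_+(x)$ and $\Delta_+(x)$, exactly as in the unnormalized estimate of $s(\oaa^\dagger\oaa)^\segs$ but now with $x$ tracked explicitly. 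Because that unnormalized bound was of order $2$ under the hypothesis Eq.~\eqref{rest}, the functional version at the relevant $x$ will be of order $2$ times a factor $\zeta^{\prime\,O(\segs M)}$ that collects the slack.

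The natural scaling for $x$ is $x = M/(z\zeta\zeta'\,2^{1/M})$, where $\zeta$ is the constant from \lem{corerr} and $\zeta'$ is a new free parameter. The factor $2^{1/M}$ is there to absorb the $[(\oaa^\dagger\oaa)^\segs]_+\lsim 2$ growth when taking the $(N+1)$-th power (since $2^{(N+1)/M}\approx 2^{3\segs}$ when $N=3\segs M$), while $\zeta'$ provides the geometric decay needed to dominate the $2^{\segs}$ from $\corrp(x)$ and the $2^{\segs+1}$ from $\Delta_{C+}(x)$. After substituting these bounds and keeping only the dominant term of $\ser'_+$, the condition $\ser'_+(x)\le 1/2$ reduces (after a little algebra) to the transcendental equation $\zeta'^5(\sqrt 2-2\zeta')^2=16\sqrt 2$, which defines $\zeta'\approx 1.5$; the lower-order contributions from the quadratic and $(\oaa^\dagger\oaa)^{2\segs}$ pieces of $\ser'$ then vanish asymptotically relative to the leading term, and are absorbed in the $\lsim$ notation.

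The main obstacle is step 4 above: controlling $[(\oaa^\dagger\oaa)^\segs]_+(x)$. In the proof of \lem{corerr} one only needed $s(\oaa^\dagger\oaa)\le 1+O([s(\err)]^2)$, because one was evaluating at the specific point $M/(z\zeta)$. Here one must carry the variable $x$ through Eq.~\eqref{unicor} and show that the extra factor picked up when raising to the $\segs$-th power is only $2^{(N+1)/M}$-sized, i.e.\ that the $x$-dependence of $[\oaa^\dagger\oaa]_+(x)-1$ remains of order $s(\err)$ at the slightly smaller $x$. This in turn relies on retaining Stirling-style bounds on $\appseg_+(x)$ and $\Delta_+(x)$ analogous to Eqs.~\eqref{sumco}--\eqref{difco} but at the new point, and then verifying numerically (as in the $M=2$ case of \lem{corerr}) that the resulting inequality holds for small $M$. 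Once $\zeta'$ is selected so that $1-\ser'_+(x)\gsim 1/2$, the chain $\corrp'(x)\lsim 2^{\segs'}$ closes and substituting into the symmetry bound yields Eq.~\eqref{errbnd3}.
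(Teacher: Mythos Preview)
Your overall architecture matches the paper's proof: the symmetry trick of Eq.~\eqref{trick} to reduce to bounding ${\corrp}'(x)$, the bound ${\corrp}'(x)\le[1-\ser'_+(x)]^{-\segs'}$, the expansion of $\ser'_+$ via Eq.~\eqref{serp}, geometric control of $\sum_q|a_q|x^q$ and $\sum_{|n|>N}|a_n|x^n$ through \lem{corerr2}, and the evaluation point $x=M/(z\zeta\zeta'2^{1/M})$.

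The one step that would fail as written is your treatment of $[(\oaa^\dagger\oaa)^\segs]_+(x)$. You propose showing that $[\oaa^\dagger\oaa]_+(x)-1$ ``remains of order $s(\err)$'' at this $x$, which via the hypothesis $\segs\,s(\err)=O(1)$ would give $[(\oaa^\dagger\oaa)^\segs]_+(x)=O(1)$. This is not true: at $x$ near $M/(z\zeta)$ the dominant cross term $\tfrac32\nu^2\delta^2$ in the absolute-value version of Eq.~\eqref{unicor} has $\nu\delta$ of order $1/2$ (precisely the content of Eq.~\eqref{eq:shreal}), so $[\oaa^\dagger\oaa]_+(x)-1$ is $O(1)$, not $O(s(\err))$. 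The paper instead bounds $[\oaa^\dagger\oaa]_+(x)\le 2$ directly via the expression \eqref{boundbb}, giving $[(\oaa^\dagger\oaa)^\segs]_+(x)\le 2^\segs$. The correct bookkeeping is then: this $2^\segs$, together with the $2^{\segs+1}$ prefactors from each of the two geometric sums coming out of \lem{corerr2}, yields an overall $2^{3\segs}$ in the leading term of $\ser'_+(x)$; the factor $(z\zeta x/M)^{N+1}=(\zeta'2^{1/M})^{-(N+1)}$ with $N=3\segs M$ supplies $2^{-3\segs}$ to cancel it, and $\zeta'$ is tuned so that the remaining $O(1)$ quantity (not a residual $2^{2\segs}$, as your sketch suggests) is at most $1/2$. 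With this accounting fixed, your outline goes through exactly as in the paper.
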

\begin{proof}{
In the same way as for \lem{corerr}, we have
\begin{equation}\label{trick2}
\sum_{|m|>N'} |a'_m| \le \frac 2{x^{N'+1}}{\corrp}'(x)\, ,
\end{equation}
where ${\corrp}'(x):=\sum_{m=-\infty}^{\infty} |a'_m| x^m$.
This function satisfies
\begin{equation}\label{crpbnd2}
{\corrp}'(x) \le \left[ 1-\ser'_+(x)\right]^{-\segs'} ,
\end{equation}
where $\ser'_+$ is the function $\ser'$ modified to take the absolute values of all coefficients.

To bound $\ser'_+(x)$, we can use Eq.~\eqref{serp} and Eq.~\eqref{unicor}.
For $\oaa^\dagger \oaa$ with the absolute values of all coefficients taken, it will be upper bounded by
\begin{equation}
\label{boundbb}
1+ \frac 32 \delta^2 + \frac 34 \delta^4 + \frac 14 \delta^6+\frac 32 \nu^2\delta^2+\frac 12 \nu^3\delta^3+\frac 32 \nu \delta^3,
\end{equation}
where
\begin{align}
\nu &:= \sum_{q=-M}^{M} |J_q(z)| x^{q} , \\
\delta & := \sum_{|n|>M} |J_n(z)| x^n.
\end{align}
These expressions are upper bounded in Eqs \eqref{sumco} and \eqref{difco}.
It was found that for $x\le M/(z\zeta)$, $\nu\delta<1/2$.
In addition, for this choice of $x$, $\nu\gg 1$ and $\delta\ll 1$, so the overall value will be no greater than $2$.
Then, using \eqref{serp}, we obtain
\begin{equation}
\ser'_+(x) \le 2^\segs \sum_{q=-\infty}^{\infty} |a_q| x^q \sum_{|n|>N} |a_n| x^n +\left( 2^\segs \sum_{q=-\infty}^{\infty} |a_q| x^q \sum_{|n|>N} |a_n| x^n \right)^2 .
\end{equation}
Note that we can use $|a_m|=|a_{-m}|$ due to symmetry.
Then, using the bound \eqref{abnd}, we have
\begin{equation}
\sum_{|n|>N} |a_n| x^n \le \frac{2^{\segs+1}}{1-z\zeta x/M}\left(\frac{z\zeta x}M\right)^{N+1}.
\end{equation}
Similarly the sum over all powers can be bounded as
\begin{equation}
\sum_{q=-\infty}^{\infty} |a_q| x^q \le \frac{2^{\segs+1}}{1-z\zeta x/M}\, .
\end{equation}
Therefore we have
\begin{equation}
\sum_{|m|>N'} |a'_m| \lsim \frac 1{x^{N'+1}} \left\{ 1- 4 \frac{2^{3\segs}}{(1-z\zeta x/M)^2}\left(\frac{z\zeta x}M\right)^{N+1} \right\}^{-\segs'}.
\end{equation}
We now wish to take $x$ to be slightly less than $M/(z \zeta 2^{1/M})$, so that the expression in braces is $\ge 1/2$.
In particular we take $x=M/(z \zeta \zeta' 2^{1/M})$, where $\zeta'\approx 1.52937$ is a solution of $\zeta'^5(\sqrt 2-2\zeta')^2=16\sqrt 2$.
Then we obtain
\begin{equation}
\sum_{|m|>N'} |a'_m| \lsim \left(\frac{z \zeta\zeta' 2^{1/M}}M\right)^{N'+1} 2^{\segs'}.
\end{equation}}
\end{proof}

We are now in a position to prove the Theorem for the complexity.
\begin{proof}[Proof of \thm{corthm2}.]
For this proof we use Algorithm \ref{alg2}.
The simulation proceeds by using compound segments, where each segment uses $\segs$ segments and a correction.
We perform $\segs'$ of these compound segments, followed by an overall correction. 
Now the overall length of the simulation is $\segs\segs'$, so we require $\segs\segs'=\tau$.
We have three requirements:
\begin{enumerate}
\item The corrections for the compound segments satisfy $s(\corr)\le 2$, so {\obv} can be performed in one step.
\item The final correction satisfies $s(\corr')\le 2$, so the final {\obv} can be performed in one step.
\item The error as obtained in \lem{corerr3} is upper bounded by $\epsilon$.
\end{enumerate}

Considering the first requirement, let us take
\begin{equation}
M=\Theta\left( \frac{\log\log\tau}{\log\log\log\tau} \right).
\end{equation}
Then we have Eq.~\eqref{Mval} satisfied for $\segs=\Theta(\log\tau)$, which implies that $s(\corr)\le 2$.
Second, we find that we have Eq.~\eqref{muval} satisfied with $\segs'=\Theta(\tau/\segs)$.
Then we obtain $s(\corr')\le 2$ for the final {\obv}, and $\segs\segs'=\Theta(\tau)$ as required.
Because the complexity of step 2 of Algorithm \ref{alg2} is $\segs\segs' M$ up to multiplying factors, it gives a contribution to the complexity of
\begin{equation}
\Theta\left(\tau \frac{\log\log\tau}{\log\log\log\tau} \right) .
\end{equation}

Finally we consider the third requirement.
We can choose $N'=9\segs\segs'M$ without changing the complexity.
Now if the expression on the RHS of Eq.~\eqref{errbnd3} in \lem{corerr3} is less than $\epsilon$, then we have satisfied this requirement.
Otherwise we can further increase $M$ in order to obtain error no greater than $\epsilon$.
The overall complexity is equal to $N'$ up to multiplying factors.
We can obtain the RHS of Eq.~\eqref{errbnd3} less than $\epsilon$ by taking $N'=O(\log(1/\epsilon))$.
Hence the overall complexity sufficient to satisfy both requirements is as given in Eq.~\eqref{recursecomplexity}.
\end{proof}

\section{Conclusions}
\label{conc}
We have shown how to perform corrections on the superposition of quantum walk steps approach to Hamiltonian simulation from Ref.~\cite{BCK15}.
This approach gives a result much closer to the lower bound for complexity than Ref.~\cite{BCK15}, because it has a sum rather than a product in the scaling of the complexity.
Our result is very close to the lower bound for the complexity in Ref.~\cite{BCK15}, except our result differs by double-logarithmic factors for the scaling in $\tau$ and $\epsilon$.

Our approach to correcting the quantum walk is sufficiently flexible that it can be used to perform an arbitrary number of rounds of correction.
That should provide scaling of the complexity with further iterated logarithms of $\tau$, though not strictly linear scaling in $\tau$.
Proving the complexity scaling is quite complicated, so we have limited to analyzing two rounds of correction here.

After completion of this work, we were made aware of another approach that yields complexity closer to the lower bound \cite{low}.
On the other hand, our result is more flexible than that in Ref.~\cite{low}, because it can be applied to simulations with sums of many different unitary operators.
In contrast, the method of Ref.~\cite{low} only applies to using a single unitary operator to simulate Hamiltonian evolution.
This means that our approach can be applied to both simulations based on quantum walks, and simulations based on a Taylor series \cite{BCK15}, whereas the method of Ref.~\cite{low} does not apply to Taylor series.
That is another important case, because it seems more useful for quantum chemistry, and also has better scaling in the number of additional gates.
In addition, it is needed for Hamiltonians that are a sum of operators sparse in different bases, which the quantum walk approach cannot be applied to.
For example, particles in a potential have a Hamiltonian of this type \cite{rolando}.

\section*{Acknowledgements}
We thank Aaron Ostrander, Aram Harrow and Andrew Childs for valuable discussions.
We acknowledge support from IARPA contract number D15PC00242.
DWB is funded by an Australian Research Council Future Fellowship (FT100100761) and a Discovery Project (DP160102426).
LN thanks the support from Funda\c{c}\~{a}o para a Ci\^{e}ncia e a Tecnologia (Portugal), namely through programmes PTDC/POPH/POCH and projects UID/EEA/50008/2013, IT/QuSim, ProQuNet, partially funded by EU FEDER, and from the EU FP7 project PAPETS (GA 323901). Furthermore, LN acknowledges the support from the DP-PMI and FCT (Portugal) through scholarship SFRH/BD/52241/2013.

\end{document}